\newcommand{\pa}{\operatorname{{pa}}}
\newcommand{\nei}{\operatorname{{ne}}}
\newcommand{\spo}{\operatorname{{sp}}}
\newcommand{\an}{\operatorname{{an}}}
\newcommand{\dse}{\perp}
\newcommand{\cd}{|}
\newcommand{\nl}{\\}
\newcommand{\margn}{\mbox{\protect\raisebox{-1pt}{\mbox{\protect
\includegraphics{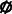}
}}}}
\newcommand{\condnc}{\mbox{\protect\raisebox{-1pt}{\mbox{\protect
\includegraphics{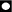}
}}}}
\newcommand{\squarea}{\mbox{\protect\raisebox{-1pt}{\mbox{\protect
\includegraphics{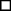}
}}}}
\newcommand{\nn}[0]{\hspace*{.7em}}
\newcommand{\ful}{\,\mbox{$\frac{ \nn\nn\;}{ \nn\nn
}$}\,}
\newcommand{\fla}{\,\mbox{$ \prec
\!\!\!\!\!\frac{\nn\nn}{\nn}$}\,}
\newcommand{\fra}{\,\mbox{$ \frac{\nn
\nn}{\nn
}\!\!\!\!\! \succ\! \hspace{.25ex}$}\,}
\newcommand{\arc}{\,\mbox{$  \prec
\!\!\!\!\!\frac{\nn\nn}{\nn}
\!\!\!\!\!
\succ\! \hspace{.25ex}$}\,}
\newtheorem{prop}{Proposition}
\newtheorem{coro}{Corollary}
\newtheorem{lemma}{Lemma}
\newtheorem{theorem}{Theorem}
\renewcommand{\citep}{\cite}
\renewcommand{\epsilon}{\varepsilon}
\begin{document}
\begin{frontmatter}

\title{Stable mixed graphs}
\runtitle{Stable mixed graphs}

\begin{aug}
\author{\fnms{Kayvan} \snm{Sadeghi}\corref{}\ead[label=e1]{sadeghi@stats.ox.ac.uk}}
\runauthor{K. Sadeghi} 
\address{Department of Statistics, University of Oxford, 1 South Parks
Road, Oxford, OX1 3TG, United Kingdom. \printead{e1}}
\end{aug}

\received{\smonth{10} \syear{2011}}
\revised{\smonth{5} \syear{2012}}

%
\begin{abstract}
In this paper, we study classes of graphs with three types of edges
that capture the modified independence structure of a directed acyclic
graph (DAG) after marginalisation over unobserved variables and
conditioning on selection variables using the $m$-separation criterion.
These include MC, summary, and
ancestral graphs. As a modification of MC graphs, we define the class
of ribbonless graphs (RGs) that permits the use of the $m$-separation
criterion. RGs contain summary and ancestral graphs as subclasses, and
each RG can be generated by a DAG after marginalisation and
conditioning. We derive simple algorithms to generate RGs, from given
DAGs or RGs, and also to generate summary and ancestral graphs in a
simple way by further extension of the RG-generating algorithm. This
enables us to develop a parallel theory on these three classes and to
study the relationships between them as well as the use of each class.
\end{abstract}

%
\begin{keyword}
\kwd{ancestral graph}
\kwd{directed acyclic graph}
\kwd{independence model}
\kwd{$m$-separation criterion}
\kwd{marginalisation and conditioning}
\kwd{MC graph}
\kwd{summary graph}
\end{keyword}

\end{frontmatter}

\section{Introduction}\label{sec1}
\noindent\emph{Introduction and motivation.} In graphical Markov models,
graphs have been used to represent conditional independence
statements of sets of random variables. Nodes of the graph
correspond to random variables and edges typically capture
dependencies. Different classes of graphs with different interpretation
of independencies have been defined and studied in the literature.

One of the most important classes of graphs in graphical models is the
class of directed acyclic graphs (DAGs) \citep{kii84,lau96}. Their
corresponding Markov models, often known under the name of Bayesian networks
\citep{pea88}, have direct applications to a
wide range of areas including econometrics, social sciences, and
artificial intelligence. When, however, some variables are unobserved,
that is
also called latent or hidden, one can in general no longer capture the
implied independence model among observed variables by a DAG. In this
sense, the DAG models
are not stable under marginalisation. A
similar problem occurs because DAG models are not
stable under conditioning \citep{wer94,pea94,cox96}.

This makes it necessary to identify and study a class of graphs that
includes DAGs and is stable under marginalisation and conditioning in
the sense that it is able to express the induced independence model
after marginalisation and conditioning through an object of the same class.
The methods that have been used to solve this problem employ three different
types of edges instead of a single type.

Three known classes of graphs
have previously been suggested for this purpose in the literature. We
specifically call these stable mixed graphs (under marginalisation and
conditioning) and they include MC graphs (MCGs) \citep{kos02}, summary
graphs (SGs) \citep{wer94,wer08}, and ancestral
graphs (AGs) \citep{ric02}.

MCGs do not use the same interpretation of independencies, called the
$m$-separation criterion, as the other types of stable mixed graphs. In
this paper, we use similar methods as in \citep{kos02} to derive a
modification of the class of MCGs to use $m$-separation, which we call
ribbonless graphs (RGs). The class of RGs is exactly the class with
three types of edges that is generated after marginalisation over and
conditioning on the node sets of a DAG. More importantly, we extend the
RG-generating algorithm to generate summary and ancestral graphs in a
theoretically neat way. These algorithms are computationally
polynomial, even though we shall not go through their computational
complexity in this paper. Defining these algorithms leads to
establishing a parallel theory for the different classes, and studying
the similarities, differences, and relationships among
them.\looseness=1

\noindent\emph{Structure of the paper.} In the next section, we
define some basic concepts of graph theory and independence models
needed in this paper.

In Section~\ref{sec3}, we define the class of RGs, give some basic
graph-theoretical definitions for these, and define the $m$-separation
criterion for interpretation of the independence structure on them.

In Section~\ref{sec4}, we formally define marginalisation and conditioning for
independence models in such a way that it conforms with marginalisation
and conditioning for probability distributions. We also formally define
stable classes of graphs.

Each of the next three sections of this paper deals with one type of
stable mixed graphs. We discuss
RGs in Section~\ref{sec5}, SGs in Section~\ref{sec6}, and AGs in Section~\ref{sec7}. In each
section, we introduce a straightforward algorithm to generate the
stable mixed graph from DAGs or from graphs of the same type. For each
type of stable mixed graph, we prove that the graphs and algorithms are
well-defined in the sense that instead of marginalising over or
conditioning on a set of nodes, by splitting the marginalisation or
conditioning set into two subsets, one can marginalise over or
condition on the first subset first, and then marginalise over or
condition on the second subset and obtain the same graph. We also prove
that the generated graphs induce the modified independence model after
marginalisation and conditioning, meaning that the generated classes
are stable under marginalisation and conditioning.

In Section~\ref{sec8}, we scrutinise the relationships between the three types
of stable mixed graphs. In Section~\ref{sec9}, we provide a discussion on the
use of the different classes of stable mixed graphs.

In the \hyperref[appm]{Appendix}, we provide the proof of lemmas, propositions, and
theorems given in the previous sections.
\section{Basic definitions and concepts}\label{sec2}

\noindent\emph{Independence models and graphs.} An \emph{independence
model} $\mathcal{J}$ over a set $V$ is a set of triples $\langle
X,Y\cd Z\rangle$ (called \emph{independence statements}), where $X$,
$Y$, and $Z$ are disjoint subsets of $V$ and $Z$
can be empty, and $\langle\varnothing,Y\cd Z\rangle$ and $\langle
X,\varnothing\cd Z\rangle$ are always included in $\mathcal{J}$. The
independence statement $\langle X,Y\cd Z\rangle$ is interpreted as
``$X$ is independent of $Y$ given $Z$''. Notice that independence
models contain probabilistic independence models as a special case. For
further discussion on independence models, see \citep{stu05}.

A \emph{graph} $G$ is a triple consisting of a \emph{node} set or
\emph{vertex} set $V$, an \emph{edge} set $E$, and a relation that with
each edge associates two nodes (not necessarily distinct), called
its \emph{endpoints}. When nodes $i$ and $j$ are the endpoints of an
edge, these are
\emph{adjacent} and we write $i\sim j$. We say the edge is \emph
{between} its two
endpoints. We usually refer to a graph as an ordered
pair $G=(V,E)$. Graphs $G_1=(V_1,E_1)$ and $G_2=(V_2,E_2)$ are called
\emph{equal} if $(V_1,E_1)=(V_2,E_2)$. In this case we write $G_1=G_2$.

Notice that the graphs that we use in this paper (and in general in the
context of graphical models) are so-called \emph{labeled graphs}, that
is, every node is considered a different object. Hence, for example,
graph $i\ful j\ful k$ is not equal to $j\ful i\ful k$.

We use the notation $\mathcal{J}^G$ for an independence model defined
over the node set of $G$. Among the independence models over the node
set $V$ of a graph $G$, those that are of interest to us \emph
{conform} with $G$, meaning that $i\sim j$ in $G$ implies $\langle
i,j\cd C\rangle\notin\mathcal{J}$ for any $C\subseteq V\setminus\{
i,j\}$. Henceforth, we assume that independence models $\mathcal{J}^G$
conform with $G$, unless otherwise stated. Notice that henceforth we
use the notation $i$ instead of $\{i\}$ for a subset consisting of a
single element $i$ in an independence statement.

\noindent\emph{Basic graph theoretical definitions.} Here we introduce some
basic graph theoretical definitions. A \emph{loop} is an edge with
the same endpoints. \emph{Multiple edges} are edges with the
same pair of endpoints. A \emph{simple graph} has neither
loops nor multiple edges.

If a graph assigns an ordered pair of nodes to each edge, then the
graph is a \emph{directed graph}.
We say that the edge is \emph{from} the first node of the ordered pair
\emph{to} the second one.
We use an arrow, $j\fra i$, to draw an edge in a directed graph.
We also call node $j$ a \emph{parent} of $i$, node $i$ a \emph{child}
of $j$ and we use the notation $\pa(i)$ for the set of all parents of
$i$ in the graph.

A \emph{walk} is a list $\langle v_0,e_1,v_1,\ldots,e_k,v_k\rangle$
of nodes and edges such that for $1\leq i\leq k$, the edge $e_i$ has
endpoints $v_{i-1}$ and $v_i$. A \emph{path} is a walk with no
repeated node or edge. A~\emph{cycle} is a walk with no repeated node
or edge except $v_0=v_k$. If the graph is simple, then a path or a
cycle can be determined uniquely by an ordered sequence of node sets.
Throughout this paper, however, we use node sequences for describing
paths and cycles even in graphs with multiple edges, but we suppose
that the edges of the path are all determined. Usually it is apparent
from the context or the type of the path which edge belongs to the path
in multiple edges.
We say a path is \emph{between} the first and the last nodes of the
list in $G$. We
call the first and the last nodes \emph{endpoints} of the path and all
other nodes \emph{inner nodes}.

A path (or a cycle) in a directed graph is \emph{direction
preserving} if all its arrows point to one direction ($\circ\fra\circ
\fra\cdots \fra\circ$). A directed graph is \emph{acyclic} if it has
no direction-preserving cycle.

If in a direction-preserving path an arrow starts at a node $j$ and an
arrow points
to a node $i$, then $j$ is an \emph{ancestor} of $i$, and $i$ a
\emph{descendant} of $j$. We use the notation $\an(i)$ for the set of
all ancestors of $i$.
\section{Independence model for ribbonless graphs}\label{sec3}

\noindent\emph{Loopless mixed graphs.} Graphs that will be discussed in
this paper are subclasses of loopless mixed graphs. A \emph{mixed
graph} is a graph containing three types of edges denoted by
arrows, arcs (two-headed arrows), and lines (solid lines). Mixed
graphs may have multiple edges of different types but do
not have multiple edges of the same type.
We do not distinguish between $i\ful j$ and $j\ful i$ or $i\arc j$
and $j \arc i$, but we
do distinguish between $j\fra i$ and $i\fra j$. Thus there are up to
four edges as a multiple edge between any two nodes. A \emph{loopless
mixed graph} (LMG) is a mixed graph that does not contain any loops (a
loop may be formed by a line, arrow, or arc).

\noindent\emph{Some definitions for mixed graphs.} For a mixed graph $H$,
we keep the same terminology introduced before for directed and
undirected graphs. We say that $i$ is a
\emph{neighbour} of $j$ if these are endpoints of a line, and $i$ is a
parent of $j$ if there is an arrow from $i$ to $j$. We also define that
$i$ is a \emph{spouse} of $j$ if these are endpoints of an arc. We use
the notations $\nei(j)$, $\pa(j)$, and $\spo(j)$ for the set of all
neighbours, parents, and spouses of $j$, respectively.

In the cases of $i\fra j$ or
$i\arc j$, we say that there is an arrowhead pointing to (at) $j$. A
path $\langle j=i_0,i_1,\ldots,i_n=i\rangle$ is \emph{from} $j$ \emph
{to} $i$ if $ji_1$ is either a line or an arrow
from $j$ to $i_1$, and $i_{n-1}i$ is either an arc or an arrow from
$i_{n-1}$ to $i$.

A \emph{{\textsf V}-configuration} or simply \emph{{\textsf V}s} is a path
with three nodes and two edges. In a mixed graph, the inner node of
three {\textsf V}s $i\fra
t\fla j$, $i\arc t\fla j$ and $i\arc t\arc j$ is
a \emph{collider} and the inner node of all other {\textsf V}s
is a \emph{non-collider} node in the {\textsf V} or more generally in a
path on which the {\textsf V} lies. We also call a {\textsf V} with collider or
non-collider inner node a \emph{collider} or \emph{non-collider {\textsf
V}}, respectively. We
may mention that a node is collider or non-collider without
mentioning the {\textsf V} or path when this is apparent from the context.
Notice that originally \citep{kii84} and in most texts, the
endpoints of a {\textsf V} are not adjacent whereas we do not use this restriction.

Two paths (including {\textsf V}s or edges) are called \emph
{endpoint-identical} if presence or lack of arrowheads pointing to
endpoints on the path are the same in both. For example, $i\fra j$,
$i\ful k\arc j$, and $i\fra k\fla l \arc j$ are all endpoint-identical
as there is an arrowhead pointing to $j$ but there is no arrowhead
pointing to $i$ on the paths.

\noindent\emph{Ribbonless graphs and its subclasses.} The largest subclass
of LMGs studied in this paper is the class of ribbonless graphs.

A \emph{ribbon} is a collider {\textsf V} $\langle h,i,j\rangle$ such that
\begin{enumerate}
\item there is no endpoint-identical edge between $h$ and $j$, that
is, there is no $hj$-arc in the case of $h\arc i\arc j$; there is no
$hj$-line in the case of $h\fra i\fla j$; and there is no arrow from
$h$ to $j$ in the case of $h\fra i\arc j$;
\item$i$ or a descendant of $i$ is an endpoint of a line or on a
direction-preserving cycle.
\end{enumerate}
A \emph{ribbonless graph} (RG) is an LMG that does not contain ribbons
as induced subgraphs.\eject

Figure~\ref{fig:3ex2} illustrates ribbons $\langle h,i,j\rangle$.
Figure~\ref{fig:3ex2nn}(a) illustrates a graph containing a
ribbon $\langle h,i,j\rangle$. Figure~\ref{fig:3ex2nn}(b) illustrates
a ribbonless graph. Notice that $\langle h,i,j\rangle$ is not here a
ribbon since there is a line between $h$ and $j$.

\begin{figure}

\includegraphics{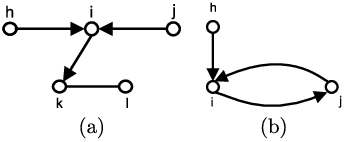}

\caption{(a) A ribbon $\langle h,i,j\rangle$, where $\nei
(i)=\varnothing$. (b) A ribbon $\langle h,i,j\rangle$.}
\label{fig:3ex2}
\end{figure}
%
\begin{figure}

\includegraphics{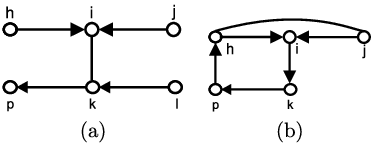}

\caption{(a) A graph that is not ribbonless. (b) A
ribbonless graph.}
\label{fig:3ex2nn}
\end{figure}

The three classes of undirected graphs (UGs) (used for concentration
graph models), bidirected graphs (BGs) (used for covariance graph
models), and DAGs are subclasses of RGs. SGs and AGs, which are studied
in this paper, are also subclasses of RGs. We use the notations
$\mathcal{RG}$, $\mathcal{SG}$, $\mathcal{AG}$, $\mathcal{UG}$,
$\mathcal{BG}$ and $\mathcal{DAG}$ for the set of all RGs, SGs, AGs,
UGs, BGs and DAGs, respectively. The common feature of all these graphs
is that these all entail independence models using the same so-called
separation criterion, which is called $m$-separation and will be
shortly defined.

\noindent\emph{The $m$-separation criterion for RGs.} The following
definition was given in \citep{ric02}.

Let $C$ be a subset of the node set $V$ of an RG. A path is
$m$-connecting given $M$ and $C$ if all its
collider nodes are in $C\cup\an(C)$ and all its non-collider nodes
are in $M$. For two other disjoint subsets of the node set $A$ and $B$
such that $M=V\setminus A\cup B\cup C$, we may just call the path
$m$-connecting given $C$ between $A$ and $B$. We say $A\dse_mB\cd C$
if there is no $m$-connecting path between $A$ and $B$ given $C$.

Notice that the $m$-separation criterion induces an independence model
$\mathcal{J}_m(G)$ on $G$ by $A\dse_m B\cd C \iff\langle A,B\cd
C\rangle\in\mathcal{J}_m(G)$.
\section{Marginalisation, conditioning and stability}\label{sec4}

\noindent\emph{Marginal and conditional independence models.} Consider an
independence model $\mathcal{J}$ over a set $V$. For $M$ a subset of
$V$, the \emph{independence model $\mathcal{J}$ after marginalisation
over $M$}, denoted by $\alpha(\mathcal{J};M,\varnothing$), is the
subset of $\mathcal{J}$ whose triples do not contain\vadjust{\goodbreak} members of $M$,
that is,
\begin{displaymath}
\alpha(\mathcal{J};M,\varnothing)= \bigl\{\langle A,B\cd D\rangle\in \mathcal{J}
\dvt(A\cup B\cup D)\cap M=\varnothing \bigr\}.
\end{displaymath}

One can observe that $\alpha(\mathcal{J};M,\varnothing)$ is an
independence model over $V\setminus M$.

For a subset $C$ of $V$, the \emph{independence model after
conditioning on $C$}, denoted by $\alpha(\mathcal{J};\varnothing
,C$), is
\begin{displaymath}
\alpha(\mathcal{J};\varnothing,C)= \bigl\{\langle A,B\cd D\rangle\dvt\langle A,B
\cd D\cup C\rangle\in\mathcal{J}\mbox{ and }(A\cup B\cup D)\cap C=\varnothing
\bigr\}.
\end{displaymath}

One can also observe that $\alpha(\mathcal{J};\varnothing,C)$ is an
independence model over $V\setminus C$.

Combining these definitions, for disjoint subsets $M$ and $C$ of $V$,
the \emph{independence model after marginalisation over $M$ and
conditioning on $C$} is
\begin{displaymath}
\alpha(\mathcal{J};M,C)= \bigl\{\langle A,B\cd D\rangle\dvt\langle A,B\cd D\cup
C \rangle\in\mathcal{J}\mbox{ and }(A\cup B\cup D)\cap(M\cup C)=\varnothing \bigr
\},
\end{displaymath}
which is an independence model over $V\setminus(M\cup C)$.

Notice here that $\alpha$ is a function from the set of independence
models and two of their subsets to the set of independence models.
Notice also that operations for marginalisation and conditioning commute.

%
\begin{figure}[b]

\includegraphics{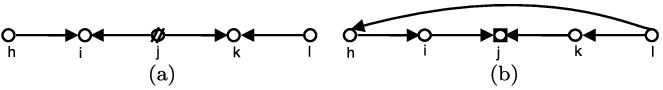}

\caption{(a) A directed acyclic graph $G_1$, by which it can
be shown that the class of DAGs is not stable under marginalisation.
($\margn\in M$.) (b) A directed acyclic graph $G_2$, by which it can
be shown that the class of DAGs is not stable under conditioning.
($\condnc\in C$.)}
\label{fig:2ex000}
\end{figure}

Marginalisation and conditioning in probability conform with
marginalisation and conditioning for independence models. Consider a
set $N=V\setminus(M\cup C)$ and a collection of random variables
$(X_\alpha)_{\alpha\in N}$ with joint density $f_{(V\setminus M)\cd
C}$. We associate an independence model to this density. It can be
shown that if $\mathcal{J}$ is the associated independence model to
the collection of random variables $(X_\alpha)_{\alpha\in V}$ with
joint density $f_V$ then the associated independence model to
$(X_\alpha)_{\alpha\in N}$ with joint density $f_{(V\setminus M)\cd
C}$ is $\alpha(\mathcal{J},M,C)$.

\noindent\emph{Stability under marginalisation and conditioning for RGs
and its subclasses.} Consider a family of graphs $\mathcal{T}$. If,
for every graph $G=(V,E)\in\mathcal{T}$ and every disjoint subsets
$M$ and $C$ of $V$, there is a graph $H\in\mathcal{T}$ such that
$\mathcal{J}_m(H)=\alpha(\mathcal{J}_m(G);\varnothing,C)$ then
$\mathcal{T}$ is stable under conditioning, and if there is a graph
$H\in\mathcal{T}$ such that $\mathcal{J}_m(H)=\alpha(\mathcal
{J}_m(G);M,\varnothing)$ then $\mathcal{T}$ is stable under
marginalisation. We call $\mathcal{T}$ \emph{stable} (under
marginalisation and conditioning) if there is a graph $H\in\mathcal
{T}$ such that $\mathcal{J}_m(H)=\alpha(\mathcal{J}_m(G);M,C)$.

Notice that if the node set of such a graph $H$ is $N$ then
$N=V\setminus(M\cup C)$.

We shall see that RGs, SGs, AGs, UGs, and BGs are stable. On the other
hand, the class of DAGs is not stable. It can be shown that $G_1$ in
Figure~\ref{fig:2ex000} is a DAG whose induced marginal independence
model cannot be represented by a DAG and $G_2$ is a DAG whose induced
conditional independence model cannot be represented by a DAG. We leave
the details as an exercise to the readers.\vadjust{\goodbreak}

\noindent\emph{Stable mixed graphs.} As the class of DAGs is not stable,
we look for stable classes of graphs that include the class of DAGs as
a subclass. In this paper, we discuss three such types of graphs,
namely RGs (as a modification of MCGs), SGs,
and AGs, and specifically call these \emph{stable mixed graphs}. We
will see that in these graphs arcs are related to marginalisation and
lines are related to conditioning.

For the graph $G_2\in\mathcal{T}$ for which $\mathcal
{J}^{G_2}=\alpha(\mathcal{J}_m(G_1);M,C)$, we use the notation
$G_2=\alpha_\mathcal{T}(G_1;M,C)$. For each type of stable mixed
graphs, we later precisely define $\alpha_\mathcal{T}$ with specific
algorithms. We call $\alpha_\mathcal{T}$ a \emph{generating
function} or more specifically a \emph{$\mathcal{T}$-generating function}.
\section{Ribbonless graphs}\label{sec5}

\noindent\emph{MC graphs and ribbonless graphs.} \emph{MCGs} only contain
the three desired types of edges. However, these are not loopless and,
in addition, in MCGs a different separation criterion is used for
inducing the independence model.
However, from an MCG
that can be generated by marginalisation and conditioning over DAGs and by a minor
modification one can generate an RG that induces the same independence model. This
modification includes adding edges between pairs of nodes connected by a ribbon such
that the generated edges preserve the arrowheads at the endpoints of the ribbon, and
removing all the loops. We shall not go through the details of this modification in this
paper, but refer readers to \cite{sad12}.
\subsection{Generating ribbonless graphs}\label{sec5.1}

\noindent\emph{A local algorithm to generate RGs from RGs.} Here we
present an algorithm to generate an RG from a given RG and two subsets
of its node set that will be marginalised over and conditioned on. This
algorithm is local in the sense that, after determining the ancestor
set of the conditioning set, it looks solely for all {\textsf V}s in the
graph and not for longer paths. Later in this section, we will show
that a graph generated by the algorithm is an RG and it induces the
marginal and conditional independence model of the input graph by using
$m$-separation.

%
\begin{table*}
\tabcolsep=0pt
\tablewidth=180pt
\caption{Types of edge induced by {\textsf V}s with inner node in
$m\in M$ or $s\in C\cup\an(C)$}\label{tab:21}
\begin{tabular*}{180pt}{@{\extracolsep{\fill}}lccc@{}}
\hline
\hphantom{0}1 & $i$\fla $m$\fla  $j$ & generates & $i$\fla $j$\nl
\hphantom{0}2 & $i$\fla $m$\ful  $j$ & generates & $i$\fla $j$\nl
\hphantom{0}3 & $i$\arc $m$\ful  $j$ & generates & $i$\fla $j$\nl
\hphantom{0}4 & $i$\fla $m$\fra  $j$ & generates & $i$\arc $j$\nl
\hphantom{0}5 & $i$\fla $m$\arc  $j$ & generates & $i$\arc $j$\nl
\hphantom{0}6 & $i$\ful $m$\fla  $j$ & generates & $i$\ful $j$\nl
\hphantom{0}7 & $i$\ful $m$\ful  $j$ & generates & $i$\ful $j$\nl
[6pt]
\hphantom{0}8 & $i$\arc $s$\fla  $j$ & generates & $i$\fla $j$\nl
\hphantom{0}9 & $i$\arc $s$\arc  $j$ & generates & $i$\arc $j$\nl
10 & $i$\fra $s$\fla  $j$ & generates & $i$\ful $j$\\
\hline
\end{tabular*}
\end{table*}

Suppose that $H$ is an RG and consider $M$ and $C$ two
disjoint subsets of the node set. There are $10$ possible
non-isomorphic {\textsf V}s in an RG, displayed in Table~\ref{tab:21}.
Notice that this table generates endpoint-identical edges to the given
{\textsf V}s. We now define the following algorithm, derived from \citep
{wer94} and \citep{kos02}. See also the appendix of \citep{wer08}.
%
\begin{alg}\label{alg:22}
$\alpha_{\mathrm{RG}}(H;M,C)$ (Generating an RG
from a ribbonless graph $H$):

Start from $H$.

Generate an endpoint identical edge between the endpoints of
collider {\textsf V}s with inner node in $C\cup\an(C)$ and non-collider
{\textsf V}s with inner node in $M$, that is, generate an appropriate edge
as in Table~\ref{tab:21} between the endpoints of every {\textsf V}
with inner node in $M$ or $C\cup\an(C)$ if the edge of the same type
does not already exist.

Apply the previous step until no other edge can be generated. Then
remove all nodes in $M\cup C$.\vadjust{\goodbreak}
\end{alg}

This method is a generalisation of the method used by \citep{lau90},
called \emph{moralisation}, as a separation criterion on DAGs. Notice
that the order of applying steps of Table~\ref{tab:21} in Algorithm \ref
{alg:22} is irrelevant since adding an edge does not destroy other {\textsf
V}s in the graph.

\begin{figure}[b]

\includegraphics{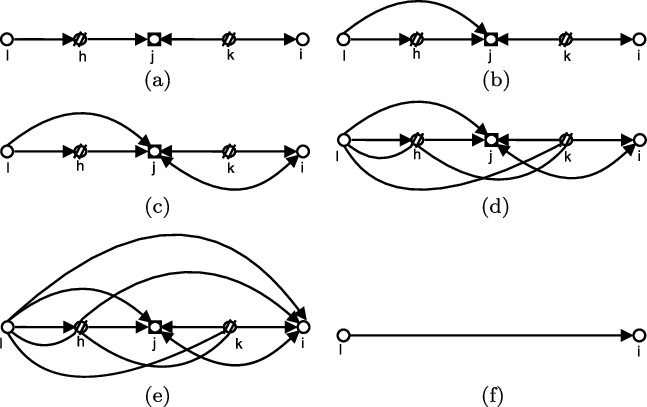}

\caption{(a) A directed acyclic graph $G$, $\margn\in M$
and $\condnc\in C$. (b) The generated graph after applying step 1 of
the table.
(c) The generated graph after applying step 4. (d) The generated graph
after applying step 10. (e) The generated graph after applying step 8.
(f) The generated RG from $G$.}
\label{fig:2ex2}
\end{figure}

Figure~\ref{fig:2ex2} illustrates how to apply Algorithm \ref{alg:22}
step by step to a DAG. We start from step 1 of Table~\ref{tab:21} and proceed step by step. We return to step 1 at the end
if there are any applicable steps left.
Since $\mathcal{DAG}\subset\mathcal{RG}$, one can also use Algorithm
\ref{alg:22} to generate an RG from a DAG. Notice that it is not
enough to simply apply steps 1, 4, and 10 of Table~\ref{tab:21} to a DAG.

\noindent\emph{Global interpretation of the algorithm.} The following
lemma explains the global characteristics of the process of
marginalisation and conditioning.
%
\begin{lemma}\label{lem:22}
Let $H$ be a ribbonless graph. There exists an edge between $i$ and $j$
in the ribbonless graph $\alpha_{\mathrm{RG}}(H;M,C)$ if and only if there
exists an endpoint-identical $m$-connecting path given $M$ and $C$
between $i$ and $j$ in $H$.
\end{lemma}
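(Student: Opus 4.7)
The plan is to prove the biconditional by handling each direction with an induction, with the forward ($\Leftarrow$) direction driven by induction on path length and the reverse ($\Rightarrow$) direction driven by induction on the step at which the edge is introduced by Algorithm \ref{alg:22}.

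For sufficiency ($\Leftarrow$), I would induct on the length $n$ of an endpoint-identical $m$-connecting path $\pi=\langle i=v_0,v_1,\dots,v_n=j\rangle$ in $H$. The base $n=1$ is immediate because the $ij$-edge of $\pi$ already lies in $H$ and survives the final deletion of $M\cup C$, since $i,j\notin M\cup C$. For $n\geq 2$, I would look at the first V $\langle v_0,v_1,v_2\rangle$: its collider/non-collider classification at $v_1$ is determined by the two edges incident to $v_1$, which are identical in $\pi$ and in the V, so the $m$-connecting property forces $v_1\in M$ in the non-collider case and $v_1\in C\cup\an(C)$ in the collider case. Algorithm \ref{alg:22} then generates an endpoint-identical edge $e$ between $v_0$ and $v_2$ using the appropriate row of Table \ref{tab:21}. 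Because $e$ preserves the arrowhead at $v_2$, the shortened path $\langle v_0,v_2,\dots,v_n\rangle$ (using $e$ followed by the original tail of $\pi$) is still $m$-connecting and endpoint-identical to $\pi$, so the inductive hypothesis applied inside the closure produces the desired $ij$-edge.

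For necessity ($\Rightarrow$), I would induct on the step at which the $ij$-edge is added by Algorithm \ref{alg:22}. The base case, where the edge is present in $H$ itself, is a trivial length-one path. For the inductive step, the edge was generated from some V $\langle i,s,j\rangle$ with $s\in M$ (non-collider case) or $s\in C\cup\an(C)$ (collider case), whose two constituent edges $is$ and $sj$ were already present at an earlier step. By the inductive hypothesis, these correspond to endpoint-identical $m$-connecting paths $\pi_1$ and $\pi_2$ in $H$. Concatenating gives a walk from $i$ to $j$; at the glue node $s$ the arrowhead pattern is inherited from the endpoint-identical first/last edges of $\pi_1,\pi_2$ and therefore agrees with the V, so $s$ is collider precisely when $s\in C\cup\an(C)$ and non-collider precisely when $s\in M$, as required.

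The main obstacle is converting this concatenated walk into a genuine path, because $\pi_1$ and $\pi_2$ may share inner nodes. My plan is to select, among all endpoint-identical $m$-connecting walks from $i$ to $j$ in $H$, one of minimum length and argue that it is already a path. If some node $v$ appeared twice, at positions $a<b$, the walk $\langle v_0,\dots,v_a,v_{b+1},\dots,v_n\rangle$ obtained by excising the interior loop is strictly shorter, and endpoint-identicality is automatic since the first and last edges are untouched. What needs verification is that the new glue point $v_a$ remains $m$-connecting: this requires a short case analysis on the arrowhead pattern at $v_a$ coming from $v_{a-1}v_a$ and $v_bv_{b+1}$ versus the membership of $v_a$ in $M$ or $C\cup\an(C)$, together with the observation that if the new walk fails to be $m$-connecting one can exhibit a different shortcut — for instance truncating between two occurrences of a node whose status does match — that succeeds. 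This case analysis, rather than the two inductions, is the real technical work in the proof.
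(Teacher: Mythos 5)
Your ($\Leftarrow$) direction is essentially the paper's own argument (shorten an $m$-connecting path by one inner node per application of Table~\ref{tab:21}), modulo the bookkeeping that after the first step the shortened path lives in an intermediate closure graph rather than in $H$. The genuine gap is in your ($\Rightarrow$) direction: you never invoke the hypothesis that $H$ is ribbonless, and the step ``$s$ is collider precisely when $s\in C\cup\an(C)$ \dots as required'' conflates membership of $s$ in $\an(C)$ in the intermediate graph where the collider rule fired with membership in $\an(C)$ in $H$. These differ: rows 2 and 3 of Table~\ref{tab:21} create arrows whose tail is the endpoint of a line, so a node can become an ancestor of $C$ during the algorithm while in $H$ it is only an ancestor of a node that ends a line. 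For such an $s$ the concatenated walk has a collider inner node outside $C\cup\an(C)$ computed in $H$, hence is not $m$-connecting, and no excision of repeated nodes repairs this. Concretely, take $H$ with $a\fra b\fla c$, $b\fra j$, $j\ful m$, $m\fra i$, $m\in M$, $i\in C$: row 2 generates $j\fra i$, whence $b\in\an(C)$ in the intermediate graph and row 10 generates an $ac$-line, yet $b\notin\an(C)$ in $H$ and $\langle a,b,c\rangle$ is not $m$-connecting there. The lemma survives only because this $H$ contains the ribbon $\langle a,b,c\rangle$; ribbonlessness forces an endpoint-identical $ac$-edge to be present in $H$ already.

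The paper handles exactly this by running the reverse induction with a deliberately weaker invariant --- a path whose non-collider inner nodes are in $M$ and whose collider inner nodes, \emph{or descendants thereof}, are in $C$ or are endpoints of lines --- and only at the end invoking ribbonlessness: any collider {\sf V} on the resulting path whose inner node has a descendant ending a line must, since it is not a ribbon, have an endpoint-identical edge between its endpoints, which is then used to shortcut that {\sf V} and recover a genuine $m$-connecting path. Some version of this is unavoidable. Your secondary concern, turning the concatenated walk into a path, is legitimate (the paper itself is silent on it in this proof and defers the concatenation analysis to Lemma~\ref{lem:2j2}), but note that the successful resolution of that case analysis again leans on ribbonlessness in its cases (c1)--(c2); it cannot be closed purely combinatorially on arrowhead patterns as your sketch suggests.
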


\noindent\emph{Basic properties of $\alpha_{\mathrm{RG}}$.} We show here that
$\alpha_{\mathrm{RG}}$ is an RG-generating function.
%
\begin{prop}\label{prop:2000}
Graphs generated by Algorithm \ref{alg:22} are RGs.
\end{prop}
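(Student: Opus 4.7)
The plan is to argue by contradiction. Suppose that $G'=\alpha_{RG}(H;M,C)$ contains a ribbon $\langle h,i,j\rangle$; the goal is to produce an endpoint-identical $hj$-edge in $G'$, violating ribbon condition~1. Throughout, $h,i,j\notin M\cup C$, the {\sf V} is a collider, and both $G'$-edges $hi$ and $ij$ carry arrowheads at $i$.

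The first step is to apply Lemma~\ref{lem:22} to the edges $hi$ and $ij$, obtaining endpoint-identical $m$-connecting paths $\pi_1$ and $\pi_2$ in $H$ (given $M$ and $C$), each ending at $i$ with an arrowhead. Concatenating, I get a walk $\pi_1\cdot\pi_2$ from $h$ to $j$ in $H$ on which $i$ appears as a collider. Let $y$ be the node adjacent to $i$ on $\pi_1$ and $z$ the node adjacent to $i$ on $\pi_2$; both edges $yi$ and $iz$ in $H$ carry arrowheads at $i$, so $\langle y,i,z\rangle$ is a collider {\sf V} in $H$. (The degenerate case $y=z$ is handled separately by a small-loop elimination in the walk.)

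Since $H$ is ribbonless, $\langle y,i,z\rangle$ is not a ribbon in $H$, so one of the two ribbon conditions must fail. In case (a), ribbon condition~1 fails for $\langle y,i,z\rangle$ in $H$: an endpoint-identical $yz$-edge $e$ exists in $H$. Replacing the segment $y\text{-}i\text{-}z$ of $\pi_1\cdot\pi_2$ by $e$ preserves the arrowhead pattern at $y$ and $z$; since the inner nodes of $\pi_1$ and $\pi_2$ already satisfy the $m$-connection criterion, the resulting walk from $h$ to $j$ in $H$ is $m$-connecting and bypasses $i$. In case (b), ribbon condition~2 fails in $H$: $i$ has no descendant in $H$ that is an endpoint of a line or lies on a direction-preserving cycle. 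Here I would use Lemma~\ref{lem:22} to lift the $G'$-arrows and $G'$-lines witnessing ribbon condition~2 in $G'$ to endpoint-identical $m$-connecting paths in $H$ and run a case analysis along the first edge out of $i$ on each lifted path, guided by Table~\ref{tab:21}. The analysis shows that under the failure of condition~2 in $H$, the only termination of the chase consistent with the $m$-connection criterion is an arrow $i\fra i'$ in $H$ into an inner collider node $i'\in C\cup\an_H(C)$; this forces $i\in\an_H(C)$, so $i$ is $m$-active as a collider on $\pi_1\cdot\pi_2$ and the walk itself is $m$-connecting.

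In either case I obtain an $m$-connecting walk in $H$ between $h$ and $j$ whose endpoint arrowhead patterns agree with those of the {\sf V}; the standard loop-cutting shortcut yields a simple endpoint-identical $m$-connecting path, and Lemma~\ref{lem:22} then produces the required endpoint-identical $hj$-edge in $G'$, contradicting ribbon condition~1. The hard part will be case (b): the purely geometric ribbon condition~2 on $G'$ (concerning lines and direction-preserving cycles) must be pushed back through the edge-generation rules of Table~\ref{tab:21} to establish $i\in\an_H(C)$ in $H$, and this requires a careful, case-by-case use of ribbonlessness of $H$ applied to intermediate collider {\sf V}s encountered along the traced paths, in order to rule out every terminal configuration other than one that hits a collider in $C\cup\an_H(C)$.
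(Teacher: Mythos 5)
Your overall strategy coincides with the paper's: lift the two edges of the alleged ribbon to endpoint-identical $m$-connecting paths in $H$ via Lemma~\ref{lem:22}, concatenate them at the inner node, show the concatenation is $m$-connecting in $H$, and apply Lemma~\ref{lem:22} in the reverse direction to produce an endpoint-identical edge between the ribbon's endpoints, contradicting ribbon condition~1. Your case~(a) and the walk-to-path reduction are sound in outline; the paper packages exactly this concatenation step as Lemma~\ref{lem:2j2}, whose cases (a) and (c) encode the same dichotomy you obtain by applying ribbonlessness of $H$ to the {\sf V} $\langle y,i,z\rangle$, so that part is a cosmetic repackaging of the paper's argument (and the paper, too, defers the walk-to-path details to another reference).

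The genuine gap is case~(b), which you correctly flag as the hard part but then only assert: ``the analysis shows that \dots\ the only termination of the chase \dots\ is an arrow into a collider in $C\cup\an(C)$.'' That claim is the entire substance of the proposition and is not discharged. What must be proved is: if, in $H$, neither $i$ nor any descendant of $i$ is the endpoint of a line or lies on a direction-preserving cycle, then the witnesses of ribbon condition~2 in $G'$ (a line at $i$ or at a $G'$-descendant of $i$, or a direction-preserving cycle through such a node) force $i\in\an(C)$ in $H$. The paper carries this out by first proving a separate structural lemma (Lemma~\ref{lem:22nn}: $i\in\an(j)$ in $\alpha_{RG}(H;M,C)$ implies that in $H$ either $i\in\an(j)$, or $i$ has a line-descendant, or $i\in\an(C)$), and then running a three-way case split on how the inner node reaches a line or a cycle in $G'$, each branch traced back through the generating rules of Table~\ref{tab:21} via Lemma~\ref{lem:22}. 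Establishing this trichotomy is precisely where ribbonlessness of $H$ and the structure of Algorithm~\ref{alg:22} interact, and until it is written out your argument is a plan rather than a proof: right approach, correct skeleton, decisive step missing.
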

%
%
Notice that for every ribbonless graph $H$, it holds that $\alpha_{\mathrm{RG}}(H;\varnothing,\varnothing)=H$.

\noindent\emph{Surjectivity of $\alpha_{\mathrm{RG}}$.} The following result shows
that the class of RGs is the exact class of graph that is generated
after marginalisation and conditioning for DAGs.
%
\begin{prop}\label{prop:vn}
The map $\alpha_{\mathrm{RG}}\dvtx \mathcal{DAG}\rightarrow\mathcal{RG}$ is surjective.
\end{prop}
\subsection{Two necessary properties of RG-generating functions}\label{sec5.2}
Here we establish the two important properties that $\alpha_{\mathrm{RG}}$ (or
every generating function) must have. In short, it must be
well-defined and it must generate a stable class of graphs.

\noindent\emph{Well-definition of $\alpha_{\mathrm{RG}}$.} The following theorem
shows that $\alpha_{\mathrm{RG}}$ is well-defined. This means that instead of
directly generating an RG we can split the nodes that
we marginalise over and condition on into two parts, first generate the
RG related to the first
part, then from the generated RG generate the desired RG related the
second part.
%
\begin{theorem}\label{thm:21n}
For a ribbonless graph $H=(N,F)$ and disjoint subsets $C$, $C_1$, $M$,
and $M_1$ of $N$,
\begin{displaymath}
\alpha_{\mathrm{RG}} \bigl(\alpha_{\mathrm{RG}}(H;M,C);M_1,C_1
\bigr)=\alpha_{\mathrm{RG}}(H;M\cup M_1,C\cup C_1).
\end{displaymath}
\end{theorem}

\noindent\emph{Stability of the graphs generated by $\alpha_{\mathrm{RG}}$.} Here
we introduce the second important property that $\alpha_{\mathrm{RG}}$ must have.
This property is the core idea in defining RGs and in general stable
mixed graphs. The modification applied by the function should generate
a graph that induces the marginal and conditional independence model.
%
\begin{figure}

\includegraphics{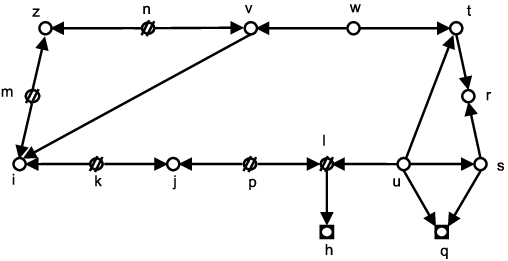}

\caption{A directed acyclic graph $G$ with sixteen nodes,
$\margn\in M$ and $\condnc\in C$.}
\label{fig:parent21}
\end{figure}
%
%
%
\begin{figure}[b]

\includegraphics{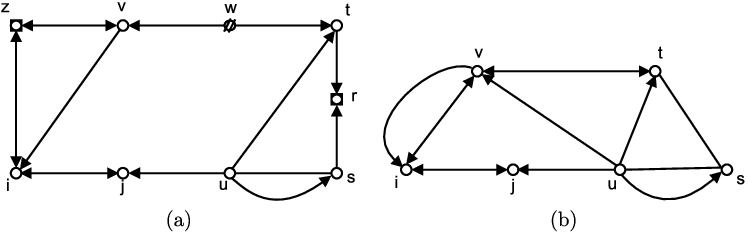}

\caption{(a) The generated ribbonless graph $H=\alpha_{\mathrm{RG}}(G,M,C)$ from the directed acyclic graph $G$ in Figure \protect\ref
{fig:parent21}, $\margn\in M_1$ and $\condnc\in
C_1$. (b) The generated ribbonless graph $\alpha_{\mathrm{RG}}(H,M_1,C_1)$ from $H$.}
\label{fig:example21}
\end{figure}

%
\begin{theorem}\label{thm:21}
For a ribbonless graph $H=(N,F)$ and disjoint subsets $A$, $B$, $C$,
$C_1$, and $M$ of $N$,
\begin{displaymath}
A\dse_mB\cd C_1 \qquad \mbox{in }\alpha_{\mathrm{RG}}(H;M,C)
 \quad \iff \quad  A\dse_mB\cd C\cup C_1 \qquad \mbox{in }H.
\end{displaymath}
\end{theorem}
%
\begin{coro}\label{cor:22}
For a ribbonless graph $H=(N,F)$ and $M$ and $C$ disjoint subsets of $N$,
\begin{displaymath}
\alpha \bigl(\mathcal{J}_m(H);M,C \bigr)=\mathcal{J}_m
\bigl(\alpha_{\mathrm{RG}}(H;M,C) \bigr).
\end{displaymath}
\end{coro}
%
\begin{coro}
The class of RGs is stable.
\end{coro}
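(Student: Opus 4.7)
The plan is to observe that this corollary is essentially an unpacking of the definition of stability against the two results immediately preceding it: Proposition \ref{prop:2000} (graphs produced by Algorithm \ref{alg:22} are RGs) and Corollary \ref{cor:22} (the $m$-separation model of $\alpha_{RG}(H;M,C)$ equals the marginal/conditional independence model $\alpha(\mathcal{J}_m(H);M,C)$). So there is no new combinatorial content to prove here; the real work was carried out in Theorem \ref{thm:21} and inherited through Corollary \ref{cor:22}.

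Concretely, I would proceed as follows. Fix an arbitrary ribbonless graph $G=(V,E)\in\mathcal{RG}$ and arbitrary disjoint subsets $M,C\subseteq V$. To verify the stability condition given in Section~4, I must exhibit some $H\in\mathcal{RG}$ with $\mathcal{J}_m(H)=\alpha(\mathcal{J}_m(G);M,C)$. The natural candidate is $H:=\alpha_{RG}(G;M,C)$. First, $H$ lies in $\mathcal{RG}$ by Proposition \ref{prop:2000}, which guarantees that the output of Algorithm \ref{alg:22} is an RG (so the required target class membership is automatic). Second, Corollary \ref{cor:22} gives exactly the required equality of independence models, namely $\mathcal{J}_m(H)=\mathcal{J}_m(\alpha_{RG}(G;M,C))=\alpha(\mathcal{J}_m(G);M,C)$.

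Since $G$, $M$, and $C$ were arbitrary, the stability requirement is met for every RG and every choice of marginalisation and conditioning sets, which is precisely the definition of $\mathcal{RG}$ being stable under marginalisation and conditioning. Because the argument is purely a combination of the two cited results, I do not anticipate any obstacle — the only nontrivial ingredient is the $m$-separation equivalence encoded in Theorem \ref{thm:21}, which is already established. One small thing worth flagging is that the definition of stability in Section~4 asks for the $\alpha_\mathcal{T}$ to be defined on the class, so it is implicit that Algorithm \ref{alg:22} can be applied to any input RG; this is clear from the algorithm's description, which assumes nothing more than that $H$ is an RG and $M,C$ are disjoint subsets of its nodes. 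Hence the proof is a one-line invocation of Proposition \ref{prop:2000} and Corollary \ref{cor:22}.
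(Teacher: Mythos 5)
Your proposal is correct and matches the paper's intended (implicit) argument exactly: take $H=\alpha_{RG}(G;M,C)$, invoke Proposition \ref{prop:2000} for membership in $\mathcal{RG}$, and Corollary \ref{cor:22} for the equality $\mathcal{J}_m(H)=\alpha(\mathcal{J}_m(G);M,C)$. The paper gives no separate proof for this corollary precisely because it is this immediate combination of the two preceding results.
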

The following result has been implicitly discussed in the literature,
for example, see~\citep{cox96}.

\begin{coro}
The classes of UGs and BGs are stable.
\end{coro}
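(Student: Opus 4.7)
The plan is to reduce the statement to the preceding corollary that the class of RGs is stable, by verifying that the generating function $\alpha_{RG}$ preserves the UG and BG subclasses. Since every UG and every BG is an RG, for any $G\in\mathcal{UG}$ (resp.\ $G\in\mathcal{BG}$) and disjoint $M,C$ in the node set we already have an RG $H=\alpha_{RG}(G;M,C)$ with $\mathcal{J}_m(H)=\alpha(\mathcal{J}_m(G);M,C)$ by Corollary 5.1. It therefore suffices to show that $H$ itself is a UG (resp.\ BG); stability of each class then follows immediately from the definition.

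For the UG case I would argue that Algorithm 5.1, when applied to a graph containing only lines, can only create lines. Every \textsf{V} in a UG has the form $i\ful m\ful j$, so its inner node is a non-collider and the only line of Table 3.1 that matches is line~7, which generates an $ij$-line. Hence after one sweep of the algorithm no arrow or arc has been introduced, and since every newly added edge is again a line, the preceding observation applies to the updated graph as well. Formally this is a straightforward induction on the number of edges added. Removing the nodes in $M\cup C$ at the end leaves a graph whose only edges are lines, i.e.\ a UG.

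For the BG case the reasoning is parallel. Every \textsf{V} in a BG has the form $i\arc s\arc j$, so its inner node is a collider and only line~9 of Table 3.1 can possibly apply. Moreover, because a BG contains no arrows, it contains no direction-preserving paths, so $\an(C)=C$; consequently line~9 fires precisely when the inner node lies in $C$, and it produces an $ij$-arc. The induction proceeds as before: the generating step can never introduce a line or an arrow, and after deleting $M\cup C$ we are left with a graph whose edges are all arcs, i.e.\ a BG.

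I do not foresee a serious obstacle, since the argument is a case analysis on the rows of Table 3.1 combined with the already-established Corollary 5.1; the only point that needs a brief justification is the identity $\an(C)=C$ inside a BG, which is immediate from the absence of arrows. Once both subclasses are shown to be closed under $\alpha_{RG}$, stability of $\mathcal{UG}$ and $\mathcal{BG}$ under marginalisation and conditioning is a direct consequence of stability of $\mathcal{RG}$.
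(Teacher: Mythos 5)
Your proposal is correct and follows the paper's own argument: the paper likewise reduces the claim to the fact that Algorithm 5.1 maps UGs to UGs and BGs to BGs, and then invokes Corollary 5.1. Your case analysis on rows 7 and 9 of Table 3.1 simply spells out the closure fact that the paper states without elaboration.
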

\begin{pf}
The result follows from the fact that, from UGs and BGs, Algorithm \ref
{alg:22} generates UGs and BGs, respectively.
\end{pf}

\begin{example*} Figure~\ref{fig:parent21} illustrates a DAG as
well as two
subsets $M$ and $C$ of its node set. Figure~\ref{fig:example21}(a)
illustrates the
generated ribbonless graph $H$ using Algorithm \ref{alg:22} as well as two
subsets $M_1$ and $C_1$ of its node set,
and Figure~\ref{fig:example21}(b) illustrates the RG generated by the
algorithm from $H$.

For example, consider the graph $\alpha_{\mathrm{RG}}(H,M_1,C_1)$ in Figure~\ref
{fig:example21}(b), and let $A=\{j\}$, $B=\{s\}$ and $C=\{i,u\}$. It is
seen that $v\in\an(C)$.
We have that $A$ is not $m$-separated from $B$ given $C$ since $\langle
j,i,v,t,s\rangle$ is an $m$-connecting path between $A$ and $B$ given
$C$. By
Theorem~\ref{thm:21} we conclude that $A$ is not $m$-separated from
$B$ given $C\cup C_1$. The same conclusion is made by observing
$m$-connecting path $\langle j,i,v,w,t,r,s\rangle$ in $H$.
\end{example*}

\section{Summary graphs}\label{sec6}

\noindent\emph{Definition of summary graphs.} A \emph{summary graph} is a
loopless mixed graph $H=(N,F)$ which contains no $\circ\ful\circ\fla
\circ$ or $\circ\ful\circ\arc\circ$ (arrowhead pointing to line)
and no
direction-preserving cycle as subgraph. Notice that there are also no
multiple edges
in SGs except multiple edges consisting of an arrow
and an arc.

Obviously the class of SGs is a subclass of RGs. Figure~\ref{fig:2exs}
illustrates an SG and an RG that is not an SG. (Because of two reasons:
existence of arrowheads pointing to lines and existence of a double
edge containing line and arrow.)

\begin{figure}

\includegraphics{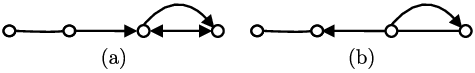}

\caption{(a) An SG. (b) An RG that is not an SG.}
\label{fig:2exs}
\end{figure}
%
\subsection{Generating summary graphs}\label{sec6.1}

\noindent\emph{A local algorithm to generate SGs.} We now present a local
algorithm (after determining the ancestor set of the conditioning set)
to generate an SG from an SG.
%
\begin{alg}\label{alg:23}
$\alpha_{\mathrm{SG}}(H;M,C)$: (Generating an SG from a summary graph
$H$)

Start from $H$. Label the nodes in $\an(C)$.
\begin{enumerate}
\item Apply Algorithm \ref{alg:22}.
\item Remove all edges (arrows or arcs) with arrowhead pointing to a
node in $\an(C)$, and replace these by the edge with the arrowhead
removed (line or arrow) if the edge does not already
exist.
\end{enumerate}

Continually apply step 1 until it is not possible to apply the given
step further before moving to the second step.
\end{alg}
Figure~\ref{fig:2ex3} illustrates how to apply Algorithm \ref
{alg:23} step by step to a DAG. Notice that as it is stated in
the description of the algorithm the order of applying the steps does
matter here.

\begin{figure}

\includegraphics{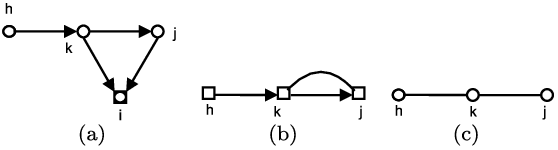}

\caption{(a) A directed acyclic graph $G$, $\condnc\in C$.
(b) The generated graph after applying step 1, $\squarea\in\an(C)$.
(c) The generated SG after applying step 2.}
\label{fig:2ex3}
\end{figure}
%

\noindent\emph{The map $\alpha_{\mathrm{SG}}$ and its basic properties.} For SGs,
we prove analogous results to those for RGs.
%
\begin{prop}\label{pro:32}
Graphs generated by Algorithm \ref{alg:23} are SGs.
\end{prop}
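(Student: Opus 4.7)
The plan is to verify the three defining properties of a summary graph for $H'' = \alpha_{SG}(H; M, C)$: (i) no arrowhead points to a line, (ii) no direction-preserving cycle, and (iii) no multi-edge other than an arrow together with an arc. Property (iii) will follow immediately from (i) and (ii), since a line alongside an arrow or an arc violates (i) at one endpoint, while two opposing arrows form a direction-preserving 2-cycle violating (ii). By Proposition~\ref{prop:2000}, the intermediate graph $H'$ produced by step~1 of Algorithm~\ref{alg:23} (which is Algorithm~\ref{alg:22}) is already an RG, so the task reduces to establishing (i) and (ii) for the graph $H''$ obtained after the arrowhead-stripping step~2.

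For (i), the central invariant I would maintain inductively is that, throughout Algorithm~\ref{alg:22} applied to the SG $H$, a node $v$ with no arrowhead at $v$ in $H$ never acquires an arrowhead. This would be proved by induction on the sequence of edge additions: in each of the arrowhead-producing steps 1--5, 8, 9 of Table~\ref{tab:21}, the generated edge inherits its arrowhead endpoints from the pre-existing ${\sf V}$, so no new arrowhead appears at a node that did not already have one. Together with the SG property of $H$ (no arrowhead at a line), this gives the structural claim that every line endpoint $v$ in $H'$ satisfies either (a) $v$ was already a line endpoint in $H$ and so has no arrowhead in $H'$, or (b) $v \in \an(C)$. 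Case (b) arises precisely from step~10 of Table~\ref{tab:21}: step~6 never applies (it would require a ${\sf V}$ $i \ful m \fla j$ whose inner node $m$ carries both a line and an arrowhead, ruled out by the invariant and the SG property), and in step~10 the condition $v \fra s$ with $s \in C \cup \an(C)$ forces $v \in \an(C)$. Since step~2 of Algorithm~\ref{alg:23} adds no arrowheads and strips every arrowhead at nodes in $\an(C)$, every line endpoint in $H''$ is arrowhead-free. For lines newly created by step~2 (from $v \fra j$ with $j \in \an(C)$), Lemma~\ref{lem:22} handles the case: if $v$ has any arrowhead then the SG property of $H$ forbids a line at $v$ in $H$, so the first edge of the corresponding endpoint-identical $m$-connecting path is $v \fra \cdot$, and tracing the direction-preserving initial segment to its first collider yields $v \in \an(C)$.

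For (ii), suppose for contradiction that $v_0 \fra v_1 \fra \cdots \fra v_k \fra v_0$ is a direction-preserving cycle in $H''$. Each $v_\ell$ carries an arrowhead from the incoming cycle arrow, so step~2 cannot have stripped arrowheads at $v_\ell$, forcing $v_\ell \notin \an(C)$. A case analysis of step~2 further shows that an arrow obtained by converting an arc $x \arc j$ with $j \in \an(C)$ has its tail at $j \in \an(C)$; since no $v_{\ell-1}$ is in $\an(C)$, all cycle arrows survive unchanged from $H'$, and the cycle itself lives in $H'$. Lemma~\ref{lem:22} then provides, for each $\ell$, an endpoint-identical $m$-connecting path $\pi_\ell$ in $H$ from $v_{\ell-1}$ to $v_\ell$. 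Since $v_{\ell-1}$ has an arrowhead in $H$ (by the invariant from (i), contrapositively), the SG property of $H$ rules out a line at $v_{\ell-1}$, and the first edge of $\pi_\ell$ must be $v_{\ell-1} \fra \cdot$. If every $\pi_\ell$ were direction-preserving, their concatenation would be a direction-preserving closed walk in $H$, hence a direction-preserving cycle in $H$, contradicting the SG property of $H$. Otherwise some $\pi_\ell$ contains a collider, which lies in $C \cup \an(C)$, and the direction-preserving initial segment of $\pi_\ell$ from $v_{\ell-1}$ to that collider shows $v_{\ell-1} \in \an(C)$, contradicting $v_{\ell-1} \notin \an(C)$.

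The main obstacle is the inductive bookkeeping in (i): one must carefully track which nodes are line endpoints and which carry arrowheads at every intermediate step of Algorithm~\ref{alg:22}, and verify that the invariant ``every line endpoint either originated as a line endpoint in $H$ or lies in $\an(C)$'' survives both the iterative edge additions of step~1 and the arrowhead-stripping of step~2. With this invariant and the cycle-lifting argument of (ii) in place, (i) and (ii), together with the reduction of (iii) to them, show that $H''$ is an SG.
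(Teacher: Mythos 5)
Your overall strategy is the same as the paper's: reduce the claim to the two structural properties (no arrowhead pointing to a line, no direction-preserving cycle), with the multi-edge condition following from these; lift any offending configuration in the output back to $H$ via Lemma \ref{lem:22} and the fact that generated edges are endpoint-identical to their generating {\sf V}s; and contradict the summary-graph property of $H$. Where the paper invokes Lemma \ref{lem:22nn} to track ancestry of $C$ backwards through $\alpha_{RG}$, you use your ``arrowhead-free nodes never acquire arrowheads'' invariant, but this is the same underlying observation.

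There is, however, one concrete error in part (i): the claim that step 6 of Table \ref{tab:21} never applies is false. Take $H$ to be the DAG with edges $d\fra a$, $a\fra c$, $b\fra c$, where $a\in M$ and $c\in C$. Step 10 applied to $a\fra c\fla b$ generates the line $a\ful b$, after which $b\ful a\fla d$ is a {\sf V} of type 6 with inner node $a\in M$, and it generates $b\ful d$. Your objection to step 6 fails because the line at the inner node need not exist in $H$: it can itself have been created by step 10, in which case that node may already carry an arrowhead. Consequently the assertion that case (b) of your dichotomy ``arises precisely from step 10'' is incomplete. The dichotomy itself survives, and the repair is short: when step 6 applies to $i\ful m\fla j$, the inner node $m$ cannot have been a line endpoint in $H$ (your invariant would then forbid the arrowhead at $m$ coming from $j$), so by induction $m\in\an(C)$; one then needs $j\in\an(C)$ as well, which follows from the same direction-preserving tracing argument you already use at the end of part (i) (or, more cleanly, from Lemma \ref{lem:22nn} as in the paper's proof). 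With that patch the structural claim for line endpoints holds, and the rest of your argument, including part (ii), goes through.
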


\noindent\emph{The map $\alpha_{\mathrm{RG}.\mathrm{SG}}$ and its properties.} Notice that
step 1 of Algorithm \ref{alg:23} generates an RG before removing the
nodes in $C$. Hence, step 2 of the algorithm generates an SG from an RG
and some extra nodes that are conditioned on. We denote these two steps
by $\alpha_{\mathrm{RG}.\mathrm{SG}}$. This shows that for generating RGs from SGs, $\an
(C)$ is needed.
%
\begin{prop}\label{prop:35nn}
Let $H=(N,E)$ be a ribbonless graph and $M$ and $C$ be subsets of $N$.
It holds that $\alpha_{\mathrm{SG}}(H;M,C)=\alpha_{\mathrm{RG}.\mathrm{SG}}(\alpha_{\mathrm{RG}}(H;M,C);\an(C))$.
\end{prop}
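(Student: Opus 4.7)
The plan is to prove the equality by unfolding the definitions of $\alpha_{SG}$, $\alpha_{RG}$, and $\alpha_{RG.SG}$ and observing that both sides describe the same sequence of operations on $H$. First I would expand $\alpha_{SG}(H;M,C)$ using Algorithm \ref{alg:23}: label the nodes of $\an_H(C)$, apply Algorithm \ref{alg:22} (which is by definition $\alpha_{RG}$ with parameters $(M,C)$), and finally execute step 2, which replaces each edge whose arrowhead points at a remaining labeled node by the same edge with that arrowhead removed. Thus $\alpha_{SG}(H;M,C)$ is precisely the graph obtained by performing step 2 on the ribbonless graph $\alpha_{RG}(H;M,C)$ with labeled set $\an_H(C)$.

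Next I would expand the right-hand side. According to the paragraph introducing $\alpha_{RG.SG}$, this map takes a ribbonless graph $K$ together with a set of labeled nodes $S$ and applies step 2 of Algorithm \ref{alg:23} to $K$ using $S$ as its labeled set. Instantiating $K := \alpha_{RG}(H;M,C)$ and $S := \an_H(C)$ yields exactly the third operation of the left-hand-side computation described above. Hence the two expressions coincide.

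The one point requiring care is the consistent interpretation of $\an(C)$ on the two sides of the equation: on the left it is the ancestor set in the \emph{starting} graph $H$, computed once at the very beginning of Algorithm \ref{alg:23}, while on the right it must likewise be read as $\an_H(C)$, since $C$ has already been removed from $\alpha_{RG}(H;M,C)$ and an ``in-place'' interpretation makes no sense. Once this identification is fixed, the only nontrivial verification is that both sides modify exactly the same set of edges: edges with an arrowhead at a node of $\an_H(C)\cap(N\setminus(M\cup C))=\an_H(C)\setminus(M\cup C)$. This set is manifestly the same for both expressions.

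The only conceivable obstacle is a worry that step 2 might introduce new configurations requiring reprocessing by step 1, in which case performing $\alpha_{RG}$ first and then step 2 (right-hand side) could differ from interleaving or iterating them. However, step 2 is a one-pass replacement whose sole effect is to strip arrowheads at $\an_H(C)$; it does not create any new V-configuration whose inner node is in $M$ or $\an_H(C)$ that was not already present (and thus not already processed) during step 1. With this observation the equality reduces to bookkeeping, and the proposition follows.
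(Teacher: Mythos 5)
Your proof is correct and matches the paper's view of this statement: the paper gives no separate proof, treating the identity as immediate from the definition of $\alpha_{RG.SG}$ as step~2 of Algorithm~\ref{alg:23} applied to the output of step~1 (which is $\alpha_{RG}(H;M,C)$), with the labeled set $\an(C)$ computed once in $H$. Your careful reading of $\an(C)$ as $\an_H(C)$ on both sides is exactly the point the paper is making when it says that $\an(C)$ must be carried along, and since the algorithm by definition never returns to step~1 after step~2, your final worry about interleaving does not arise.
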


\noindent\emph{Surjectivity of $\alpha_{\mathrm{SG}}$.} The following result shows
that every member of $\mathcal{SG}$ can be generated by a DAG after
marginalisation and conditioning.
%
\begin{prop}\label{prop:vvn}
The map $\alpha_{\mathrm{SG}}\dvtx \mathcal{DAG}\rightarrow\mathcal{SG}$ is surjective.
\end{prop}
%
\subsection{Two necessary properties of SG-generating functions}\label{sec6.2}
Here, we express two important results that have been introduced for
graphs generated by
$\alpha_{\mathrm{RG}}$ for graphs generated by $\alpha_{\mathrm{SG}}$.

\noindent\emph{Well-definition of $\alpha_{\mathrm{SG}}$.} This property is
analogous to the well-definition of $\alpha_{\mathrm{RG}}$ as defined in the
previous section. For a proof based on matrix representations of graphs
and on
properties of corresponding matrix operators, see \citep{wer08}.
%
\begin{theorem}\label{prop:25}
For a summary graph $H=(N,F)$ and disjoint subsets $C$, $C_1$, $M$, and
$M_1$ of $N$,
\begin{displaymath}
\alpha_{\mathrm{SG}} \bigl(\alpha_{\mathrm{SG}}(H;M,C);M_1,C_1
\bigr)=\alpha_{\mathrm{SG}}(H;M\cup M_1,C\cup C_1).
\end{displaymath}
\end{theorem}
%
%
%
%
%
%
%
%

\begin{figure}

\includegraphics{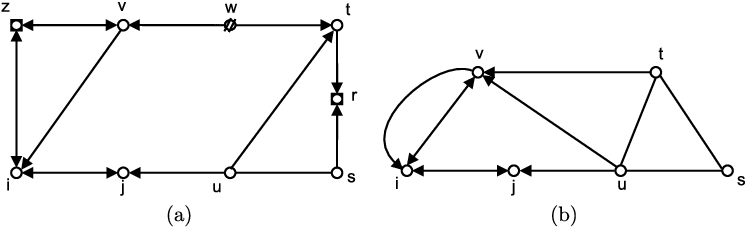}

\caption{(a) The generated SG from the DAG in Figure \protect\ref
{fig:parent21}, $\margn\in M_1$ and $\condnc\in C_1$.
(b) The generated SG from the SG in
(a).}
\label{fig:example22}
\end{figure}

\noindent\emph{Stability of the graphs generated by $\alpha_{\mathrm{SG}}$.} We
prove that analogous to RGs, graphs generated by $\alpha_{\mathrm{SG}}$ induce
the marginal and conditional independence model. This result can be
implied from what was discussed in \citep{wer08}.
%
\begin{theorem}\label{thm:22}
For a summary graph $H=(N,F)$ and disjoint subsets $A$, $B$, $C$,
$C_1$, and $M$ of $N$,
\begin{displaymath}
A\dse_mB\cd C_1 \qquad \mbox{in }\alpha_{\mathrm{SG}}(H;M,C)
 \quad \iff  \quad A\dse_mB\cd C\cup C_1 \qquad \mbox{in }H.
\end{displaymath}
\end{theorem}
\begin{coro}\label{cor:23}
For a summary graph $H=(N,F)$ and $M$ and $C$ disjoint subsets of $N$,
\begin{displaymath}
\alpha \bigl(\mathcal{J}_m(H);M,C \bigr)=\mathcal{J}_m
\bigl(\alpha_{\mathrm{SG}}(H;M,C) \bigr).
\end{displaymath}
\end{coro}
%
\begin{coro}
The class of SGs is stable.
\end{coro}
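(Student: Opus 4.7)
The plan is to read off the conclusion directly from Corollary \ref{cor:23} together with Proposition \ref{pro:32}. By the definition of stability introduced in Section 4, showing that $\mathcal{SG}$ is stable amounts to producing, for every summary graph $G=(V,E)$ and every pair of disjoint subsets $M,C\subseteq V$, a summary graph $H$ such that $\mathcal{J}_m(H)=\alpha(\mathcal{J}_m(G);M,C)$.

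The natural candidate witness is $H:=\alpha_{SG}(G;M,C)$, the graph produced by Algorithm \ref{alg:23}. First, I would invoke Proposition \ref{pro:32} to conclude that $H\in\mathcal{SG}$, so the candidate lies in the claimed stable class. Second, I would invoke Corollary \ref{cor:23}, which states precisely
\begin{displaymath}
\alpha(\mathcal{J}_m(G);M,C)=\mathcal{J}_m(\alpha_{SG}(G;M,C)),
\end{displaymath}
to conclude that $\mathcal{J}_m(H)$ equals the marginal and conditional independence model. Since the node set of $H$ is automatically $V\setminus(M\cup C)$, as noted immediately after the definition of stability, no further check on the vertex set is required.

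There is essentially no obstacle here: the whole content of the corollary has been packaged into the earlier results. The only point one must be careful about is that the definition of stability demands a \emph{single} output graph that handles marginalisation and conditioning simultaneously (rather than treating the two separately), but Algorithm \ref{alg:23} and Corollary \ref{cor:23} are already stated in this simultaneous form, so no splitting argument is needed. Thus the proof reduces to a one-line citation of Proposition \ref{pro:32} and Corollary \ref{cor:23}, in direct analogy with the corresponding corollary for $\mathcal{RG}$ in the previous section.
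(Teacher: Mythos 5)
Your proof is correct and is exactly the argument the paper intends (and leaves implicit): the witness $\alpha_{SG}(G;M,C)$ lies in $\mathcal{SG}$ by Proposition \ref{pro:32}, and Corollary \ref{cor:23} supplies the required identity of independence models. Nothing further is needed.
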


\begin{example*} Figure~\ref{fig:example22}(a) illustrates the
generated SG from the DAG in Figure~\ref{fig:parent21}
using Algorithm \ref{alg:23} as well as the two subsets $M_1$ and
$C_1$ of its node set. Figure~\ref{fig:example22}(b) illustrates the SG
generated by the algorithm from the SG in part (a).
\end{example*}

\section{Ancestral graphs}\label{sec7} An \emph{ancestral graph} (AG) is a
simple mixed graph that has the following properties for every node
$i$:
\begin{enumerate}
\item$i \notin\an(\pa(i)\cup\spo(i))$;
\item If $\nei(i)\neq\varnothing$, then $\pa(i)\cup\spo
(i)=\varnothing$.
\end{enumerate}
This means that there is no arrowhead pointing to a line and there is
no direction-preserving cycle, and there is no arc with one endpoint
that is an ancestor of the other endpoint in the graph.

AGs are obviously a subclass of SGs, and therefore RGs. Figure~\ref
{fig:2exa1} illustrates an SG that is not ancestral. (Because of an arc
with one endpoint that is an ancestor of the other.)

\begin{figure}

\includegraphics{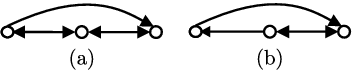}

\caption{(a) An AG. (b) An SG that is not ancestral.}
\label{fig:2exa1}
\end{figure}
%
\subsection{Generating ancestral graphs}\label{sec7.1}

\noindent\emph{A local algorithm to generate AGs.} In \citep{ric02},
there is a method to generate AGs (in fact maximal AGs) globally by
looking at the so-called inducing paths. Here we introduce an algorithm
to generate AGs locally (after determining the ancestor set) by looking
only for {\textsf V}s after determining the ancestor set of the
conditioning set.
%
\begin{alg}\label{alg:25}
$\alpha_{\mathrm{AG}}(H;M,C)$ (Generating an AG from an ancestral graph
$H$):

Start from $H$.
\begin{enumerate}
\item Apply Algorithm \ref{alg:23}.
\item Generate respectively an arrow from $j$ to $i$ or an arc between
$i$ and $j$ for {\textsf V} $j\fra k\arc i$ or {\textsf V} $j\arc k\arc i$ when
$k\in\an(i)$ if the arrow or the arc does not already exist.
\item Remove the arc between $j$ and $i$ in the case that $j\in\an
(i)$, and replace it by an arrow from $j$
to $i$ if the arrow does not already exist.
\end{enumerate}

Continually apply each step until it is not possible to apply the given
step further before moving to the next step.
\end{alg}
Figure~\ref{fig:2ex4n} illustrates how to apply Algorithm \ref
{alg:25} step by step to a DAG.

\begin{figure}[b]

\includegraphics{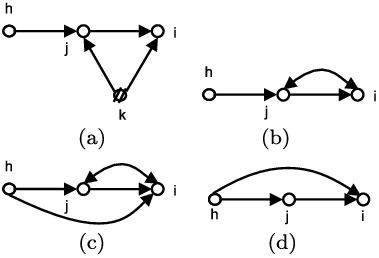}

\caption{(a) A directed acyclic graph $G$, $\margn\in M$.
(b) The generated graph after applying step 1.
(c) The generated graph after applying step 2 for {\textsf V} $\langle
h,j,i\rangle$. (d) The generated AG from $G$ after applying step 3.}
\label{fig:2ex4n}
\end{figure}

\noindent\emph{The map $\alpha_{\mathrm{AG}}$ and its basic properties.} Basic
properties of Algorithm \ref{alg:25} and its corresponding function
are analogous to the basic properties of RGs and SGs.
%
\begin{prop}\label{prop:26}
Graphs generated by Algorithm \ref{alg:25} are AGs.
\end{prop}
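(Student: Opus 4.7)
The plan is to verify the three defining properties of an AG — (i) no direction-preserving cycle, (ii) no arc $j\arc i$ with $j\in\an(i)$ (the ``ancestral arc'' forbidden by Condition 1 of the definition), and (iii) no arrowhead pointing to a line (Condition 2) — and additionally to check that the output is simple. I treat the algorithm in three stages matching its three steps.

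Stage 1 (after step 1). By Proposition \ref{pro:32}, the graph $G'$ produced by Algorithm \ref{alg:23} is an SG. Thus $G'$ already satisfies (i) and (iii), and the only multiple edges $G'$ can contain are of the form ``arrow plus arc''.

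Stage 2 (tracking ancestors). The heart of the argument is the following invariant: \emph{as steps 2 and 3 are executed, $\an(v)$ does not grow beyond $\an(v)$ in $G'$, for every node $v$}. This is verified edge-by-edge. When step 2 adds $j\fra i$ because of $j\fra k\arc i$ with $k\in\an(i)$, the presence of the arrow $j\fra k$ together with the directed path $k\to\cdots\to i$ already places $j$ in $\an(i)$. The arcs added in step 2 contain no arrow-tail and so create no new ancestors. In step 3, replacing $j\arc i$ by $j\fra i$ is only performed when $j\in\an(i)$ is already true. Hence ancestors are monotonically preserved. Because the set of possible edges on $N\setminus(M\cup C)$ is finite and both steps only add arrows/arcs or trade arcs for arrows, termination is immediate.

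Stage 3 (checking AG axioms). For (i), any direction-preserving cycle created by adding $j\fra i$ would require $i\in\an(j)$ in the graph before the addition; combined with $j\in\an(i)$ (already established at the moment of addition), this forces a direction-preserving cycle in $G'$, contradicting its being an SG. For (iii), every arrowhead added by steps 2 and 3 lands at a node that already carried an arrowhead in $G'$ (coming from the $\arc$ of the triggering {\sf V}-configuration, or from the arc being replaced in step 3); since $G'$ is an SG, such nodes have no incident line, and since steps 2 and 3 never introduce lines, (iii) is preserved. For (ii), I note that any arc present at termination either came from $G'$ or was added in step 2. Step 3 is iterated until \emph{no} arc $j\arc i$ with $j\in\an(i)$ or $i\in\an(j)$ remains; the invariant of Stage 2 implies that each iteration strictly decreases the number of such arcs (replacing one arc by an arrow, without creating new ancestral relations), so this is well-defined and exhaustive.

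Finally, simplicity. The only multiple edges that can appear in the intermediate graph are of the form $j\fra i$ together with $j\arc i$: SGs forbid multiple arrows (no direction-preserving 2-cycles), and lines cannot coexist with arrowheads at a common endpoint. But in any arrow-plus-arc multi-edge $j\fra i,\ j\arc i$ we have $j\in\an(i)$, so step 3 removes the arc, leaving a single arrow. Hence the final graph is simple and satisfies all AG axioms. The main technical obstacle, in my view, is the invariant of Stage 2 and in particular the subtle case in step 2 where both $j\fra k\arc i$ and $i\fra k\arc j$ could conceivably coexist; this is excluded by the same SG-acyclicity argument used for (i), since both cases would put $i$ and $j$ in $\an(k)\cap\spo(k)$ together with $k\in\an(i)\cap\an(j)$, forcing a direction-preserving cycle in $G'$.
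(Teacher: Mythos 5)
Your proof is correct and follows essentially the same route as the paper's: reduce to the summary graph produced by step 1 (Proposition \ref{pro:32}), then check that steps 2 and 3 create no direction-preserving cycles and that step 3 eliminates every arc joining a node to one of its ancestors. Your explicit ancestor-preservation invariant and your checks of simplicity and of the no-arrowhead-pointing-to-a-line condition are details the paper's proof leaves implicit, but they do not change the substance of the argument.
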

%
As before we consider
$\alpha_{\mathrm{AG}}$ as a function from the set of AGs and two subsets
of their node set to the set of AGs.

Notice that by the extension of the generated AG to a maximal AG (as
explained in \citep{ric02}) the same maximal AG as that generated by
the method explained in \citep{ric02} is generated, and hence these
two graphs induce the same independence model. This also explains the
global interpretation of the algorithm. We will not give the details in
this paper.

\noindent\emph{The map $\alpha_{\mathrm{SG}.\mathrm{AG}}$ and its properties.} Notice that
step 1 of Algorithm \ref{alg:25} generates an SG. Hence steps 2 and 3
of the algorithm generate an AG from an SG. We denote these two steps
by $\alpha_{\mathrm{SG}.\mathrm{AG}}$, a function from $\mathcal{SG}$ to $\mathcal{AG}$.
%
\begin{prop}\label{prop:35}
It holds that $\alpha_{\mathrm{AG}}=\alpha_{\mathrm{SG}.\mathrm{AG}}\circ\alpha_{\mathrm{SG}}$.
\end{prop}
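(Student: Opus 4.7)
The plan is to unwind the definitions of both algorithms and verify that the two sides of the asserted equality perform exactly the same sequence of operations on an input ancestral graph. Fix an AG $H=(N,F)$ and disjoint subsets $M,C\subseteq N$. Because every AG is an SG (noted in Section 7), $\alpha_{SG}(H;M,C)$ is well-defined, and by Proposition \ref{pro:32} it returns an SG; hence $\alpha_{SG.AG}$ can legitimately be composed with it, so the right-hand side $\alpha_{SG.AG}(\alpha_{SG}(H;M,C))$ makes sense and lies in $\mathcal{AG}$.

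The key observation is that Algorithm \ref{alg:25} is built in three explicit stages. Step 1 is literally the directive ``apply Algorithm \ref{alg:23}'', which by definition computes $\alpha_{SG}(H;M,C)$. After step 1 the graph under consideration is therefore exactly the summary graph $\alpha_{SG}(H;M,C)$ on node set $N\setminus(M\cup C)$. Steps 2 and 3 then perform the ancestor-based V-completions and arc-to-arrow replacements. By the paragraph that defines $\alpha_{SG.AG}$ immediately preceding the proposition, these two steps \emph{are} precisely the map $\alpha_{SG.AG}:\mathcal{SG}\to\mathcal{AG}$. Composing these observations, the graph produced by Algorithm \ref{alg:25} is identical to $\alpha_{SG.AG}(\alpha_{SG}(H;M,C))$, which gives the claimed equality of functions.

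The one point I would check carefully is that the ancestor conditions $k\in\an(i)$ of step 2 and $j\in\an(i)$ of step 3 are evaluated consistently on the two sides. These are determined by the current state of the graph, and as edges (arrows in particular) are added the ancestor relations may grow. However, on both sides of the equation steps 2 and 3 start from the same intermediate graph $\alpha_{SG}(H;M,C)$ and proceed by the same iterative rule with the same ``continually apply each step before moving on'' convention. Hence the sequences of edges inserted and transformed are identical, and so are the final graphs.

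There is no substantive obstacle: the proposition is essentially a definitional reorganisation of Algorithm \ref{alg:25}, isolating its first stage as $\alpha_{SG}$ and its subsequent stages as $\alpha_{SG.AG}$. The only mild care needed is in confirming that $\alpha_{SG.AG}$ correctly inherits from $\alpha_{SG}(H;M,C)$ both the node set $N\setminus(M\cup C)$ and the ancestor information used in steps 2 and 3, so that no additional $M,C$-arguments are required on the outside; this is exactly the reason $\alpha_{SG.AG}$ was introduced as a function from $\mathcal{SG}$ to $\mathcal{AG}$ with no further inputs.
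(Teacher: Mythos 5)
Your proof is correct and takes essentially the same route as the paper, which in fact offers no separate argument for this proposition: it is treated as immediate from the definition of $\alpha_{SG.AG}$ as steps 2 and 3 of Algorithm \ref{alg:25}, with step 1 being literally $\alpha_{SG}$. Your additional check that the ancestor conditions in steps 2 and 3 are evaluated on the same intermediate graph on both sides is a sensible (if strictly unnecessary) precaution that the paper leaves implicit.
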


\noindent\emph{Surjectivity of $\alpha_{\mathrm{AG}}$.} The following result shows
that every member of $\mathcal{AG}$ can be generated by a DAG after
marginalisation and conditioning.
%
\begin{prop}
The map $\alpha_{\mathrm{AG}}\dvtx \mathcal{DAG}\rightarrow\mathcal{AG}$ is surjective.
\end{prop}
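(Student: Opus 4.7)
The plan is to piggy-back on the surjectivity of $\alpha_{SG}$ via the factorisation in Proposition \ref{prop:35}. Fix an arbitrary $H\in\mathcal{AG}$. Since $\mathcal{AG}\subset\mathcal{SG}$, the graph $H$ is in particular a summary graph, and Proposition \ref{prop:vvn} produces a DAG $G$ together with disjoint subsets $M$ and $C$ of its node set such that $\alpha_{SG}(G;M,C)=H$. Proposition \ref{prop:35} then gives
\[
\alpha_{AG}(G;M,C)=\alpha_{SG.AG}\bigl(\alpha_{SG}(G;M,C)\bigr)=\alpha_{SG.AG}(H),
\]
so the entire task reduces to showing that $\alpha_{SG.AG}$ acts as the identity whenever its input is already ancestral.

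To verify $\alpha_{SG.AG}(H)=H$, I check that steps 2 and 3 of Algorithm \ref{alg:25}, of which $\alpha_{SG.AG}$ consists, are vacuous on $H$. Step 3 would remove an arc between $j$ and $i$ with $j\in\an(i)$; but the arc forces $i\in\spo(j)$, and the defining property $j\notin\an(\pa(j)\cup\spo(j))$ applied at $j$ yields $j\notin\an(i)$, a contradiction. Step 2 would fire on a {\sf V} of the form $j\fra k\arc i$ or $j\arc k\arc i$ with $k\in\an(i)$; in either case the graph contains an arc between $k$ and $i$, so $i\in\spo(k)$, and the same application of the ancestral property at $k$ gives $k\notin\an(i)$, again contradicting the hypothesis. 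Hence neither step is triggered on the first pass, the algorithm terminates immediately at $H$, and $\alpha_{AG}(G;M,C)=H$ as required.

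I do not expect any real obstacle: the argument is a short chase through Propositions \ref{prop:vvn} and \ref{prop:35} once one observes that the configurations driving $\alpha_{SG.AG}$ are precisely those excluded by the ancestral property. The only point requiring care is that the ancestral condition must be applied at the endpoint claimed to be the ancestor (i.e.\ at $j$ or $k$ above, not at $i$), so that $\cdot\notin\an(\spo(\cdot))$ delivers the desired contradiction.
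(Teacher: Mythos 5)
Your proof is correct and follows essentially the same route as the paper's: reduce to the surjectivity of $\alpha_{SG}$ (Proposition \ref{prop:vvn}), factor through Proposition \ref{prop:35}, and observe that $\alpha_{SG.AG}(H)=H$ when $H$ is already ancestral. The only difference is that you spell out why steps 2 and 3 of Algorithm \ref{alg:25} are vacuous on an ancestral graph (via $k\notin\an(\spo(k))$), a fact the paper merely asserts.
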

\begin{pf}
The result follows from Proposition~\ref{prop:vvn}, the fact that
$\mathcal{AG}\subseteq\mathcal{SG}$, and if $H\in\mathcal{AG}$
then $\alpha_{\mathrm{SG}.\mathrm{AG}}(H)=H$.
\end{pf}
%
\subsection{Two necessary properties of AG-generating functions}\label{sec7.2}
Again we discuss the two important
properties that we have proven for two other stable mixed graphs.

\noindent\emph{Well-definition of $\alpha_{\mathrm{AG}}$.} Well-definition of
$\alpha_{\mathrm{AG}}$ is analogous to the well-definition of $\alpha_{\mathrm{RG}}$
and $\alpha_{\mathrm{SG}}$
as defined in the previous sections.
%
\begin{theorem}\label{thm:25}
For an ancestral graph $H=(N,F)$ and disjoint subsets $C$, $C_1$, $M$,
and $M_1$ of $N$,
\begin{displaymath}
\alpha_{\mathrm{AG}} \bigl(\alpha_{\mathrm{AG}}(H;M,C);M_1,C_1
\bigr)=\alpha_{\mathrm{AG}}(H;M\cup M_1,C\cup C_1).
\end{displaymath}
\end{theorem}

\noindent\emph{Stability of the graphs generated by $\alpha_{\mathrm{AG}}$.}
Analogous to RGs and SGs,
graphs generated by $\alpha_{\mathrm{AG}}$ induce marginal and conditional
independence models. An analogous result was proven in \citep{ric02}
for maximal AGs that were generated in that paper.
%
\begin{theorem}\label{thm:23}
For an ancestral graph $H=(N,F)$ and disjoint subsets $A$, $B$, $C$,
$C_1$, and $M$ of $N$,
\begin{displaymath}
A\dse_mB\cd C_1 \qquad \mbox{in }\alpha_{\mathrm{AG}}(H;M,C)
 \quad \iff \quad  A\dse_mB\cd C\cup C_1 \qquad \mbox{in }H.
\end{displaymath}
\end{theorem}
%
\begin{coro}[(\citep{ric02})]\label{cor:24}
For an ancestral graph $H=(N,F)$ and $M$ and $C$ disjoint subsets of $N$,
\begin{displaymath}
\alpha \bigl(\mathcal{J}_m(H);M,C \bigr)=\mathcal{J}_m
\bigl(\alpha_{\mathrm{AG}}(H;M,C) \bigr).
\end{displaymath}
\end{coro}
%
\begin{coro}
The class of AGs is stable.
\end{coro}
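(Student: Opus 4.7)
The plan is to observe that this corollary is an immediate formal consequence of the two results stated just above it, so essentially no new argument is required; all that is needed is to unwind the definition of stability from Section 4 and verify that the witness graph provided by $\alpha_{AG}$ meets every clause of that definition.

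Concretely, I would begin by fixing an arbitrary ancestral graph $G=(V,E)$ and arbitrary disjoint subsets $M,C\subseteq V$. Recall that for a family $\mathcal{T}$ of graphs, stability requires the existence of some $H\in\mathcal{T}$ with $\mathcal{J}_m(H)=\alpha(\mathcal{J}_m(G);M,C)$. I would propose the candidate $H:=\alpha_{AG}(G;M,C)$ and then check the two required properties in turn.

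First, membership: by Proposition \ref{prop:26}, every graph produced by Algorithm \ref{alg:25} is an AG, so $H\in\mathcal{AG}$. Second, the independence-model condition: Corollary \ref{cor:24}, applied with the same $G$, $M$, $C$, gives
\begin{displaymath}
\alpha(\mathcal{J}_m(G);M,C)=\mathcal{J}_m(\alpha_{AG}(G;M,C))=\mathcal{J}_m(H),
\end{displaymath}
which is exactly what stability demands. Since $G$, $M$, $C$ were arbitrary, the class $\mathcal{AG}$ is stable.

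There is no real obstacle in this proof; all the work has already been done in Proposition \ref{prop:26} (which guarantees that $\alpha_{AG}$ does not leave the class of ancestral graphs) and in Theorem \ref{thm:23}/Corollary \ref{cor:24} (which guarantee that the generating function realises the correct marginal and conditional independence model). The only thing one must be careful about is to note explicitly that the node set of $H$ equals $V\setminus(M\cup C)$, as required by the remark following the definition of stability; this is immediate from the final step of Algorithm \ref{alg:25}, inherited through Algorithm \ref{alg:23} and Algorithm \ref{alg:22}, which removes precisely the nodes in $M\cup C$.
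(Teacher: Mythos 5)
Your proposal is correct and is exactly the argument the paper intends (the corollary is left without an explicit proof precisely because it is this immediate consequence of Proposition \ref{prop:26} and Corollary \ref{cor:24}). Your additional remark that the node set of $\alpha_{AG}(G;M,C)$ is $V\setminus(M\cup C)$ is a sensible bit of bookkeeping that the paper also treats as immediate.
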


\begin{example*} Figure~\ref{fig:example23}(a) illustrates the
AG generated from the DAG in Figure~\ref{fig:parent21}. Figure~\ref{fig:example23}(b) illustrates the AG generated by the algorithm
from the AG in part (a).
\end{example*}

\begin{figure}

\includegraphics{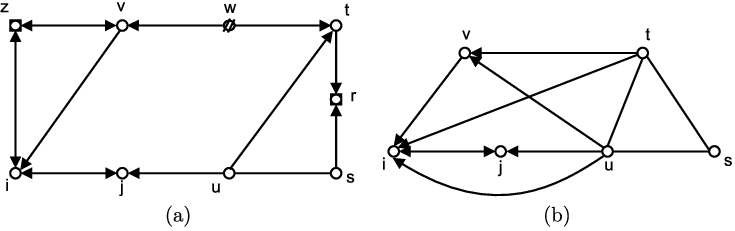}

\caption{(a) The generated AG from the DAG in Figure \protect\ref
{fig:parent21}, $\margn\in M_1$ and $\condnc\in C_1$.
(b) The generated AG from the AG in
(a).}
\label{fig:example23}
\end{figure}
\section{The relationship between different types of stable mixed graphs}\label{sec8}
Thus far, we have defined RGs (as a modification of MCGs), SGs, and
AGs, and introduced algorithms to generate each of these from a graph
of the same class or a DAG, and some algorithms that act between these
classes. Despite the similarities of these definitions and generating
algorithms of these different classes, as well as the parallel theory
developed for these, it is of interest to investigate the exact
relationship between these types of graphs.

\noindent\emph{Corresponding stable mixed graphs.} When one starts from a
DAG and generates different types of stable mixed graphs after
marginalisation over and conditioning on two specific subsets of the
node set of the DAG, the generated graphs must induce the same
independence models.
This leads us to the definition of corresponding stable mixed graphs.
For a directed acyclic graph $G$ and two disjoint subsets of its node
set $M$ and
$C$, graphs $\alpha_{\mathrm{RG}}(G;M,C)$, $\alpha_{\mathrm{SG}}(G;M,C)$, and
$\alpha_{\mathrm{AG}}(G;M,C)$ are called, respectively, the
\emph{corresponding} RG, SG, and AG.

We observe that the corresponding RGs, SGs, and AGs of a DAG induce the
same independence model. This fact, without being formulated in this
way, was discussed in all three papers that define these graphs \citep
{kos02,ric02,wer08}.
%
\begin{prop}
For a directed acyclic graph $G=(V,E)$ and disjoint subsets $C$ and $M$
of $V$,
\begin{displaymath}
\mathcal{J}_m \bigl(\alpha_{\mathrm{RG}}(G;M,C) \bigr)=
\mathcal{J}_m \bigl(\alpha_{\mathrm{SG}}(G;M,C) \bigr)=
\mathcal{J}_m \bigl(\alpha_{\mathrm{AG}}(G;M,C) \bigr).
\end{displaymath}
\end{prop}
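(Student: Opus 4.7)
The proof plan is short because the heavy lifting has already been done by the three stability corollaries (Corollary \ref{cor:22}, Corollary \ref{cor:23}, and Corollary \ref{cor:24}). The key observation is that a DAG $G$ lies simultaneously in all three classes: $\mathcal{DAG} \subseteq \mathcal{AG} \subseteq \mathcal{SG} \subseteq \mathcal{RG}$. This inclusion is stated earlier in the paper when AGs are introduced as a subclass of SGs, and SGs as a subclass of RGs, with $\mathcal{DAG}$ as a subclass of all. Consequently, each of the three generating functions $\alpha_{RG}$, $\alpha_{SG}$, and $\alpha_{AG}$ can be legitimately applied to $G$ with the same input subsets $M$ and $C$.

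My plan is then simply to invoke the three corollaries in succession, using $G$ as the input graph in each. Explicitly, Corollary \ref{cor:22} gives
\begin{displaymath}
\mathcal{J}_m(\alpha_{RG}(G;M,C)) = \alpha(\mathcal{J}_m(G);M,C);
\end{displaymath}
Corollary \ref{cor:23} gives
\begin{displaymath}
\mathcal{J}_m(\alpha_{SG}(G;M,C)) = \alpha(\mathcal{J}_m(G);M,C);
\end{displaymath}
and Corollary \ref{cor:24} gives
\begin{displaymath}
\mathcal{J}_m(\alpha_{AG}(G;M,C)) = \alpha(\mathcal{J}_m(G);M,C).
\end{displaymath}
All three right-hand sides are identical, so transitivity of equality yields the claim.

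There is no real obstacle here: the only point one might worry about is whether the different classes interpret $\mathcal{J}_m(G)$ identically when $G$ is a DAG, but this is immediate because every class under consideration uses the same $m$-separation criterion defined in Section 3, applied to whatever LMG is at hand. Hence no ambiguity arises from viewing $G$ as an RG, SG, or AG. The proposition therefore reduces to a single line once the three stability corollaries are available, and my write-up will consist of this one-line identification together with an explicit mention that $G \in \mathcal{AG} \subseteq \mathcal{SG} \subseteq \mathcal{RG}$ to justify applying each corollary.
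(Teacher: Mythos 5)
Your proposal is correct and coincides with the paper's own argument: the proof there is exactly the one-line invocation of Corollaries \ref{cor:22}, \ref{cor:23}, and \ref{cor:24}, each applied to the DAG $G$ viewed as a member of the respective class, with all three right-hand sides equal to $\alpha(\mathcal{J}_m(G);M,C)$. Your added remark that $G$ lies in $\mathcal{AG}\subseteq\mathcal{SG}\subseteq\mathcal{RG}$ and that all classes use the same $m$-separation criterion is a harmless (and slightly more explicit) justification of the same step.
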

\begin{pf}
The result follows from Corollaries \ref{cor:22}, \ref{cor:23}
and \ref{cor:24}.
\end{pf}
As it was shown, in SGs and AGs there are extra properties regarding
the structure of the graph. We know that $\mathcal{AG}\subset\mathcal
{SG}\subset\mathcal{RG}$. The corresponding AG to an SG can be
generated by $\alpha_{\mathrm{SG}.\mathrm{AG}}$ as outlined in Proposition~\ref
{prop:35}. However, we cannot generate the corresponding SG to an RG by
only knowing the RG and not the DAG (or the conditioning set of the
DAG). For example, DAGs $\circ\fla\margn\fra\circ\fra\condnc\fla
\margn$ and $\circ\fla\margn\fra\circ$, where $\margn\in M$ and
$\condnc\in C$, give the same RG $\circ\arc\circ$ but different SGs
$\circ\fla\circ$ and $\circ\arc\circ$ respectively. This is also
true for AGs instead of SGs.

It is possible, however, to introduce an algorithm to generate SGs that
induce the same independence model as the given RGs, by removing
arrowheads pointing to a line or a node that is an ancestor of a node
that is the endpoint of a line.

\begin{figure}

\includegraphics{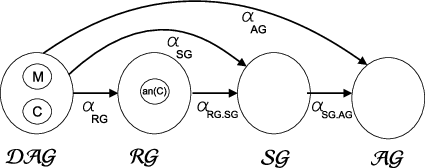}

\caption{The relationship between $\mathcal{DAG}$,
$\mathcal{RG}$, $\mathcal{SG}$ and $\mathcal{AG}$.}
\label{fig:32}
\end{figure}

We have also seen that the image of generating functions is big enough
to cover all graphs included in the set of the related type of stable
mixed graphs, since the generating functions are surjective. On the
other hand, it is easy to show that generating functions are not
injective. Therefore, the relationship between the three types of
stable mixed graphs is summarised by the diagram in Figure~\ref
{fig:32}, in which one can only move towards the directions that arrows show.

%
\section{Discussion on the use of different types of stable mixed graphs}\label{sec9}
By what we discussed, if $G$ is a DAG with latent variables $M$ and
selection variables $C$ then stable mixed graphs are a class of graphs
that represent the independence model implied among the remaining
variables, conditional on the selection variables. However, each of the
three types has been used in different contexts and for different purposes.

\noindent\emph{Why MCGs or RGs?} MCGs have been introduced in order to
straightforwardly deal with the problem of finding a class of graphs
that is closed under marginalisation and conditioning by a simple
process of deriving these from DAGs. In fact, the class of MCGs is much
larger than what one really needs for representing independence models
after marginalisation and conditioning. We have noted that only MCGs
that are ribbonless can be generated this way.

\noindent\emph{Why SGs?} The main goal of defining SGs is to trace the
effects after marginalisation and conditioning, as will be explained
shortly in this section. By using binary matrix
representations of graphs, called edge matrices, and corresponding
matrix operators \citep{wer06}, the edge matrix of a SG is obtained.
It contains
three types of edge matrices: those for solid lines, dashed lines
(corresponding to arcs),
and for arrows. In the family of joint Gaussian distributions, solid lines
in concentration graphs correspond to concentration matrices, dashed
lines in covariance graphs to covariance matrices and arrows
to equation parameters in structural equation models.

SGs are used when the generating DAG is known. Despite knowledge on the
structure of the generating DAG, SGs are still of interest in at least
three situations: (1) For models with large number of unobserved and
selection variables; and (2) for the comparison of models when one of
them has unobserved or selection variables that are a subset of the
unobserved or selection variables of the other; (3) for detecting some
types of confounding as shown in \citep{werc08} and as described
briefly later.

\noindent\emph{Why AGs?} The main goal of defining AGs is to represent and
parametrise sets of distributions obeying
Markov properties. Even though, we discussed the class of AGs in this
paper to sustain a parallel theory to RGs and SGs, the class of maximal
AGs possess some desired
properties that AGs do not. These include the fact that under the
Gaussian path diagram parametrisation the maximal AG only implies
independence constraints, while a general AG implies other types of
constraints. We will give a short discussion on maximality in this
section. Maximal AGs are the simplest structures that capture the
modified independence model, and are also of interest when the
generating DAG is not known, but a set of conditional independencies is
known. In the Gaussian case, maximal AGs are
identified. In contrast to DAG models with hidden variable, the models
are curved exponential families \citep{ric02}, and conditional fitting
algorithm for maximum likelihood
estimation exists \citep{drt04}.

\noindent\emph{Maximal stable mixed graph.} A graph $G$ is called \emph
{maximal} if by adding any edge to $G$ the independence model induced
by $\dse_m$ changes (gets smaller). Therefore, in maximal graphs,
every missing edge corresponds to at least one independence statement
in the induced independence model. This leads to validity of a
so-called pairwise Markov property.

In \citep{ric02}, maximality of the subclass of AGs was studied. This
result also holds for RGs and says that a ribbonless graph $H$ is
maximal if and only if $H$ does not contain any \emph{primitive
inducing paths}, which are paths of form $\langle j,q_1,q_2,\ldots,q_p,i\rangle$, on which $i\nsim j$ and for every $n$, $1\leq n \leq
p$, $q_n$ is a collider on the path and $q_n\in\an(\{i\}\cup\{j\})$.
We shall not give the details in this paper.

Therefore, to generate a maximal stable mixed graph from a stable mixed
graph one should repeatedly generate arrows from $j$ to $i$ for
primitive inducing paths between non-adjacent $i$ and $j$ where there
is no arrowhead pointing to $j$, and generate arcs between $i$ and $j$
for primitive inducing paths between non-adjacent $i$ and $j$ where
there are arrowheads pointing to $i$ and $j$. Notice that by applying
this algorithm after the generating algorithms one can generate a
maximal AG, SG, or RG.

As discussed, maximal AGs possess many desired properties that AGs do
not. For SGs, it is conjectured that maximal SGs possess the same
statistical properties that both maximal AGs and SGs do possess. To
show this, further work is needed.

\noindent\emph{The structure of different types of stable mixed graphs.}
If we suppose that stable mixed graphs are only used to represent the
independence model after marginalisation and conditioning, then we can
consider all types as equally appropriate. The question then will be
reduced to how simple or fast generating a type of graph is. We have
seen that AGs have the simplest structure among the three types of
stable mixed graphs, and RGs are the most complex. Therefore, as we
have also seen, it is more complex to generate an AG than to generate
an SG, and to generate an SG than to generate an RG. On the other hand,
the simpler structure allows a faster way of checking independence
statements. Hence, it is a tradeoff that depends on the relative size
of the marginalisation and conditioning sets in graphs.

When generating stable mixed graphs from DAGs, one always loses some
information in order to obtain a simpler structure in stable mixed
graphs. RGs have lost the least information among the three types of
stable mixed graphs, while AGs the most. Here we discuss the lost
information in the context of regression analysis.

\noindent\emph{Multivariate regression and stable mixed graphs.} The
problem of constructing stable mixed graphs
was originally posed by \citep{wer94} in the context of multivariate
statistics based on regression analysis. In such literature, the DAG
model is defined by sequences
of univariate recursive regressions, called a \emph{linear triangular
system} by \citep{werc04}, that is, for $i=1,\ldots,d_N-1$, each
single response variable $Y_i$
is regressed on $Y_{\pa(i)}$, where the parents of $i$ are a subset of
$\{i+1,\ldots,d_N\}$. Linear triangular systems can be written as
$AY=\epsilon$, where $A$ is an upper-triangular matrix with unit
diagonal elements, and $\epsilon$ is a vector of zero mean and
uncorrelated random variables, called \emph{residuals}. Here the
nonzero regression coefficient of $Y_i$ on $Y_j$ can be attached the
arrow from $j$ to $i$ in the DAG and is called the \emph{direct
effect} of $Y_j$ on $Y_i$; see \citep{cox93}.

In particular, for linear triangular systems, RGs alert to distortions
due to so-called over-conditioning via multiple edges consisting of a
line and an arrow. Over-conditioning arises by conditioning on a
variable that is a response of two variables, one of which itself is a
response to the other one.

For example, in Figure~\ref{nnnn}, the generating process is given by
three linear equations,
\[
Y_1=\beta Y_2+\delta Y_3+
\epsilon_1, \qquad  Y_2=\gamma Y_3+
\epsilon_2, \qquad  Y_3=\epsilon_3,
\]
where each residual $\epsilon_i$ has mean zero and is uncorrelated
with the explanatory variables on the
right-hand side of the equation.

\begin{figure}

\includegraphics{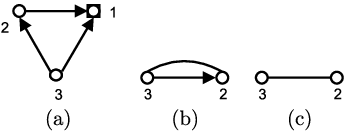}

\caption{(a) A directed acyclic graph $G$ with node 1 to be
conditioned on. (b) The RG generated from $G$.
(c) The SG or AG generated from $G$.}
\label{nnnn}
\end{figure}
By conditioning on $Y_1$, the conditional
dependence of $Y_2$ on only $Y_3$ is obtained, which consists of the
direct effect $\gamma$ and
an indirect effect of $Y_2$ on $Y_3$ via $Y_1$. This may be seen by
direct calculation,
assuming that the residuals $\epsilon_i$ have a Gaussian distribution,
which leads to
\[
E(Y_2\cd Y_3)= \bigl(\gamma- \bigl\{ \bigl(1-
\gamma^2 \bigr)/ \bigl(1-\rho_{13}^2 \bigr)
\bigr\} \beta\rho_{13} \bigr)Y_2, \qquad  \mbox{where }
\rho_{13}= \delta+\beta\gamma.
\]
Thus, the direct effect $\gamma$ is distorted by $-\{(1-\gamma^2)/(1-\rho_{13}^2)\}\beta\rho_{13}$. The potential presence of this
distortion is represented in (b) by the addition of an arrow.

In addition, the existence of multiple edges with an arrow and an arc,
and arcs with one endpoint ancestor of the other, which are not
permissible in AGs, respectively, alerts distortions due to so-called
direct and indirect \emph{confounding}.

\begin{figure}

\includegraphics{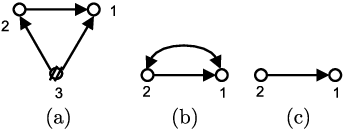}

\caption{(a) A directed acyclic graph $G$ with node 3 to be
marginalised over. (b) The SG generated from $G$.
(c) The AG generated from $G$.}
\label{fig:321}
\end{figure}

With the same generating process as explained for Figure~\ref{nnnn}, by
integrating out $Y_3$ in Figure~\ref{fig:321}, the conditional
dependence of $Y_1$ on only $Y_2$ is obtained, which consists of the
direct effect $\beta$ and
an indirect effect of $Y_1$ on $Y_2$ via $Y_3$. This leads to
\[
E(Y_1\cd Y_2)=(\beta+\delta\gamma)Y_2.
\]
Thus, the direct effect $\beta$ is distorted by $\delta\gamma$. The
potential presence of this distortion is represented in (b) by the
addition of an arc. This example indicates a distortion due to direct
confounding; see \citep{werc08}. Indirect confounding was also studied
in \citep{werc08} for marginalising only over a full set
of background variables and also in \citep{wer08} more generally
relating SGs
to corresponding maximal AGs.

\begin{appendix}
\section*{Appendix: Proofs}\label{appm}
Here we present the proof of lemmas, propositions, and theorems of this
paper, but first we introduce some observations that are used in our
proofs as the following lemmas.
%
\begin{lemma}\label{lem:22nn}
If $i\in\an(j)$ in $\alpha_{\mathrm{RG}}(H;M,C)$, then in $H$ one of the
following holds: \emph{(1)}~$i\in\an(j)$; \emph{(2)} $i$ or a descendant of $i$ is
the endpoint of a line; \emph{(3)} $i\in\an(C)$.
\end{lemma}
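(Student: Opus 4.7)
The plan is to reduce the lemma to a single-arrow statement and then glue along a direction-preserving path in $\alpha_{RG}(H;M,C)$.

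For the single-arrow step, I want to show that for every arrow $w\fra w'$ in $\alpha_{RG}(H;M,C)$, at least one of the three alternatives of the lemma holds in $H$ with $(i,j)$ replaced by $(w,w')$. By Lemma~\ref{lem:22}, there is an endpoint-identical $m$-connecting path $\pi=\langle w=u_0,u_1,\dots,u_s=w'\rangle$ in $H$ given $M$ and $C$, so $\pi$ has no arrowhead at $w$, an arrowhead at $w'$, every inner non-collider in $M$, and every inner collider in $C\cup\an(C)$. Since $\pi$ has no arrowhead at $w$, the first edge $u_0u_1$ is either a line (in which case $w$ is a line endpoint and (2) holds) or a forward arrow $u_0\fra u_1$. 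In the latter case, let $r^\star$ be the largest index with $u_0\fra u_1\fra\cdots\fra u_{r^\star}$ all forward arrows in $H$. If $r^\star=s$ then $w\in\an(w')$ in $H$, giving (1). Otherwise the next edge $u_{r^\star}u_{r^\star+1}$ either is a line, making $u_{r^\star}$ a descendant of $w$ in $H$ and a line endpoint, giving (2); or it carries an arrowhead at $u_{r^\star}$, so together with the incoming $u_{r^\star-1}\fra u_{r^\star}$ the node $u_{r^\star}$ is a collider of $\pi$ and thus lies in $C\cup\an(C)$, and combined with $w\in\an(u_{r^\star})$ in $H$ this yields $w\in\an(C)$ in $H$, i.e., (3).

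For the gluing step, given a direction-preserving path $i=w_0\fra w_1\fra\cdots\fra w_k=j$ witnessing $i\in\an(j)$ in $\alpha_{RG}(H;M,C)$, apply the single-arrow step to each arrow $w_{\ell-1}\fra w_\ell$ and let $\ell^\star$ be the smallest index in $\{1,\dots,k\}$ at which (2) or (3) holds for $(w_{\ell^\star-1},w_{\ell^\star})$ (take $\ell^\star=k+1$ if none does). If $\ell^\star=k+1$, then $w_{\ell-1}\in\an(w_\ell)$ in $H$ for every $\ell$, so transitivity of ancestry in $H$ gives $i\in\an(j)$ in $H$, which is (1). If $\ell^\star\le k$, the same transitivity gives $i\in\an(w_{\ell^\star-1})$ in $H$ (or $i=w_{\ell^\star-1}$ when $\ell^\star=1$), and the conclusion at $w_{\ell^\star-1}$ is promoted to $i$ using that a descendant of a descendant is a descendant and that an ancestor of a node in $\an(C)$ again lies in $\an(C)$.

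The only real subtlety is bookkeeping about the ambient graph: the direction-preserving chain $u_0\fra u_1\fra\cdots\fra u_{r^\star}$ lives in $H$, so it delivers $w\in\an(u_{r^\star})$ in $H$ even though its intermediate vertices may sit in $M\cup C$ and therefore disappear when passing to $\alpha_{RG}(H;M,C)$. One also has to check carefully that the arrowhead pattern at each inner node matches the collider versus non-collider classification used above, which is routine given the definition of $m$-connecting and the endpoint-identity of $\pi$ to $w\fra w'$.
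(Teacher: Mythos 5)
Your proposal is correct and follows essentially the same route as the paper: reduce to a single edge of the direction-preserving path, invoke Lemma~\ref{lem:22} to get an endpoint-identical $m$-connecting path in $H$ with no arrowhead at the tail, follow its maximal direction-preserving prefix to land in one of the three alternatives (first departure being a line gives~(2), an arrowhead gives an inner collider in $C\cup\an(C)$ and hence~(3), and no departure gives~(1)), and then propagate along the path by transitivity of ancestry. Your write-up merely makes explicit the case analysis that the paper's proof leaves as ``one can observe.''
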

\begin{pf}
We know that there is a direction-preserving path $\pi=\langle
i=i_0,i_1,\ldots,i_p=j\rangle$ in $\alpha_{\mathrm{RG}}(H;M,C)$. Consider the
$i_0i_1$-edge. By Lemma~\ref{lem:22} in $H$ given $M$ and $C$, there
is an $m$-connecting path between $i_0$ and $i_1$, on which there is no
arrowhead pointing to $i_0$. One can observe that if this path is not a
direction-preserving path then one of the following holds: (1)~$i_0$ is
an ancestor of a collider node on the path, which is in $C\cup\an
(C)$. Hence, $i\in\an(C)$; (2) $i_0$ is the endpoint of a line or an
ancestor of a node that is the endpoint of a line on the path. If (1)
or (2) holds, then we are done, hence assume that $i_0\in\an(i_1)$.
By the same argument and by induction along the nodes of $\pi$, we
conclude the result.
\end{pf}
%
\begin{lemma}\label{lem:22nnn}
For $i$ and $j$ outside $M\cup C$, if $i\in\an(j)$ in $H$ then one of
the following holds: \emph{(1)}~$i\in\an(j)$ in $\alpha_{\mathrm{RG}}(H;M,C)$; \emph{(2)}
$i\in\an(C)$ in $H$.
\end{lemma}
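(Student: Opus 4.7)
The strategy is to take a direction-preserving path witnessing $i\in\an(j)$ in $H$ and either locate a conditioning node on it, or iteratively shortcut out each marginalised node by using the edge-generation step of Algorithm \ref{alg:22}. Let $\pi=\langle i=v_0,v_1,\dots,v_k=j\rangle$ be such a path in $H$. If some inner node $v_t$ lies in $C$, then the initial segment $\langle v_0,\dots,v_t\rangle$ is a direction-preserving path from $i$ to a node in $C$, so $i\in\an(C)$ in $H$ and conclusion~(2) holds.

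Assume from now on that no inner node of $\pi$ lies in $C$. Write $H^\ast$ for the graph obtained from $H$ by running the edge-generation loop of Algorithm \ref{alg:22} to saturation, so that $\alpha_{RG}(H;M,C)$ is exactly $H^\ast$ with the nodes of $M\cup C$ removed. I will show by induction on $r$ that, for every direction-preserving path from $i$ to $j$ in $H^\ast$ with no inner node in $C$ and exactly $r$ inner nodes in $M$, there is a direction-preserving path from $i$ to $j$ in $H^\ast$ whose inner nodes avoid $M\cup C$ altogether. Since $\pi$ already lies in $H\subseteq H^\ast$, applying this claim to $\pi$ yields a direction-preserving path from $i$ to $j$ that survives the final node-removal step of Algorithm \ref{alg:22}, giving $i\in\an(j)$ in $\alpha_{RG}(H;M,C)$, which is conclusion~(1).

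The base case $r=0$ is immediate since the path in question already avoids $M\cup C$. For $r\geq 1$, take such a path in $H^\ast$ and pick any inner $M$-node $m=v_t$ on it. The sub-configuration $v_{t-1}\fra m\fra v_{t+1}$ is a {\sf V} whose inner node is a non-collider in $M$; rewritten as $v_{t+1}\fla m\fla v_{t-1}$ it matches row~1 of Table~\ref{tab:21}, so $H^\ast$ contains the arrow $v_{t-1}\fra v_{t+1}$. Splicing this arrow in for the sub-walk through $m$ produces a direction-preserving path from $i$ to $j$ in $H^\ast$ with $r-1$ inner $M$-nodes and still no inner $C$-node, to which the induction hypothesis applies. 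The one subtle point is to carry out the induction inside the saturated graph $H^\ast$ rather than in $H$ itself, so that every spliced-in arrow coexists with the remaining edges of the path; this is licensed by the remark after Algorithm \ref{alg:22} that adding edges does not destroy {\sf V}s, and hence the edge-generation phase is monotone and order-independent.
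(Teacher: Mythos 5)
Your proof is correct and follows essentially the same route as the paper's: either an inner node of the direction-preserving path lies in $C$ (giving conclusion~2), or each inner $M$-node is shortcut using the arrow generated by row~1 of Table~\ref{tab:21}, yielding a direction-preserving path that survives the removal of $M\cup C$. Your reorganisation as an induction on the number of inner $M$-nodes within the saturated graph, rather than an induction along the path, is only a cosmetic difference, and your explicit remark about working in the saturated graph cleanly handles a point the paper leaves implicit.
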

\begin{pf}
We know that there is a direction-preserving path $\pi=\langle
i=i_0,i_1,\ldots,i_p=j\rangle$ in $H$. Consider the $i_0i_1$-edge. We
now have three cases: (1) If $i_1\in C$, then $i\in\an(C)$ in $H$ and
we are done. (2) If $i_1\in M$, then Algorithm \ref{alg:22} generates
an arrow from $i_0$ to $i_2$. (3) If $i_1\not\in M\cup C$, then $i_0\in
\an(i_2)$ in $\alpha_{\mathrm{RG}}(H;M,C)$. By the same argument and by
induction along the nodes of $\pi$, we conclude the result.
\end{pf}
The following lemma deals with the \emph{concatenation} of
$m$-connecting paths. We shall not give the details of the proof here;
see \citep{sad12}.
%
\begin{lemma}\label{lem:2j2}
In an RG, suppose that given $M$ and $C$ there are $m$-connecting paths
$\langle i=i_0,i_1,\ldots,i_n,h\rangle$ between $i$ and $h$ and
$\langle j=j_0,j_1,\ldots,j_m,h\rangle$
between $h$ and $j$. In this case, there is an $m$-connecting path
given $M$ and $C$ between $i$ and $j$ if one of the following holds:
\begin{enumerate}[(b2)]
\item[(a1)] $\langle i_n,h,j_m\rangle$ is collider and $h\in C\cup\an(C)$;
\item[(a2)] $i_n=j_m$ with arrowhead pointing to $h$ on the $i_nh$-edge
and $h\in C\cup\an(C)$;
\item[(b1)] $\langle i_n,h,j_m\rangle$ is non-collider and $h\in M$;
\item[(b2)] $i_n=j_m$ with no arrowhead pointing to $h$ on the
$i_nh$-edge and $h\in M$.
\item[(c1)] $\langle i_n,h,j_m\rangle$ is collider and $h$ or a
descendant of $h$ is the endpoint of a line or a direction-preserving cycle;
\item[(c2)] $i_n=j_m$ with arrowhead pointing to $h$ on the $i_nh$-edge
and $h$ or a descendant of $h$ is the endpoint of a line or a
direction-preserving cycle.
\end{enumerate}
\end{lemma}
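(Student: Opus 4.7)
The plan is to prove each of the six sub-cases by the same two-step recipe: first construct an $m$-connecting walk from $i$ to $j$, then reduce it to an $m$-connecting path by a standard loop-cutting argument (which checks, at each revisited node, that cutting out the loop preserves the collider-in-$C\cup\an(C)$ or non-collider-in-$M$ classification). Once the walk-to-path reduction is taken as a lemma, the content lies entirely in the walk construction, which is dispatched by cases.

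For (a1), (b1), (a2), and (b2) I would form the direct concatenation $W = \langle i_0, \ldots, i_n, h, j_m, \ldots, j_0\rangle$, which is a genuine walk (with $i_n$ repeated) when $i_n = j_m$. Every inner node of $W$ other than $h$ inherits its collider/non-collider status from $P_1$ or $P_2$, so the $m$-open requirement at those nodes is inherited from the hypothesis. At $h$, the two-sided arrowhead configuration of (a1), (a2) makes $h$ a collider on $W$ and $h \in C \cup \an(C)$ closes the check, while the two-sided non-arrowhead configuration of (b1), (b2) makes $h$ a non-collider on $W$ and $h \in M$ does the job.

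For (c1) the ribbon-lessness of $H$ is the key: the collider V $\langle i_n, h, j_m\rangle$ satisfies condition~(2) of the ribbon definition by the line/cycle hypothesis on $h$, so condition~(1) must fail and an endpoint-identical edge between $i_n$ and $j_m$ must exist. Taking $W = \langle i_0, \ldots, i_n, j_m, \ldots, j_0\rangle$ through this edge gives an $m$-connecting walk, because endpoint-identicalness preserves the arrowheads at $i_n$ and $j_m$, so their statuses on $W$ match their statuses on $P_1$ and $P_2$. Case (c2) is the degenerate variant where $i_n = j_m$ so the V collapses to a single edge; one shortcuts directly through $i_n$ to form $W = \langle i_0, \ldots, i_n = j_m, j_{m-1}, \ldots, j_0\rangle$ and verifies that the status of $i_n$ on $W$ is consistent, using the arrowhead-at-$h$ hypothesis to control the arrowheads at $i_n$ and the line/cycle hypothesis about $h$ to supply the needed ancestor-of-line-endpoint witness. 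The hard part is precisely this (c2) analysis together with the walk-to-path reduction in the presence of multiple edges: at each cut node $v$ one must show, by an edge-by-edge case analysis of the four potential edge types incident to $v$ in the cut walk, that $v$'s new collider/non-collider classification remains compatible with $v \in C\cup\an(C)$ or $v \in M$, and this is the sort of bookkeeping that the author defers to the thesis reference.
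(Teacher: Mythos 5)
The paper does not actually prove this lemma: it is stated with the remark ``We shall not give the details of the proof here; see [sad12]'', so there is no in-paper argument to compare yours against. Your skeleton --- concatenate the two paths at $h$, check the status of $h$ using the case hypotheses, invoke ribbonlessness in cases (c1)/(c2) to obtain the endpoint-identical $i_nj_m$-edge guaranteed by the failure of condition (1) of the ribbon definition, and then reduce the resulting walk to a path --- is the natural one and is consistent with how the author uses the lemma elsewhere (e.g.\ in the proofs of Proposition \ref{prop:2000} and Lemma \ref{lem:22}, where exactly this ``replace a bad collider {\sf V} by the endpoint-identical edge forced by ribbonlessness'' move appears).

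There is, however, a genuine gap in treating the walk-to-path reduction as a ``standard'' check that the new classification at a cut node $v$ ``remains compatible'' with the hypotheses. For this notion of $m$-connection (non-colliders required to lie in $M$, colliders in $C\cup\an(C)$) the naive check can fail: if $v$ occurs once on each of $P_1$ and $P_2$ and is a non-collider at both occurrences because the two edges pointing towards $h$ carry no arrowhead at $v$, then after cutting out the middle portion the surviving pair of edges may both have arrowheads at $v$, so $v$ becomes a collider --- but all you know is $v\in M$, which does not put $v$ in $C\cup\an(C)$. (The same failure occurs at $i_n=j_m$ itself in the degenerate cases (a2)/(b2)/(c2), e.g.\ $i_{n-1}\fra i_n\fra h$ on $P_1$ and $j_{m-1}\fra j_m\fra h$ on $P_2$.) Repairing this requires a further argument that you do not supply: follow the arrow leaving $v$ into the excised portion of the walk; this direction-preserving stretch must terminate either at a collider of the original walk (whence $v\in\an(C)$ and the new collider is legal), or at a node that is the endpoint of a line, or back at $v$ on a direction-preserving cycle --- and in the latter two situations condition (2) of the ribbon definition holds at the new collider {\sf V} around $v$, so ribbonlessness forces an endpoint-identical bypass edge and $v$ can be deleted from the walk, shortening it. In other words, ribbonlessness is needed throughout the reduction, not only in cases (c1)/(c2), and without this step the argument for cases (a) and (b) is incomplete.
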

\begin{pf*}{Proof of Proposition~\ref{prop:2000}}
Graphs generated by Algorithm \ref{alg:22} have obviously three
desired types of edges and are loopless.

Now suppose, for contradiction, that there is a ribbon $\langle
i,h,j\rangle$ in a generated graph $\alpha_{\mathrm{RG}}(H;M,C)$. By Lemma~\ref{lem:22} in $H$ given $M$ and $C$, there are $m$-connecting paths
$\pi_1=\langle i=i_0,i_1,\ldots,i_n,h\rangle$ between $i$ and $h$ and
$\pi_2=\langle j=j_0,j_1,\ldots,j_m,h\rangle$ between $h$ and $j$
such that there are arrowheads at $h$ on both $i_nh$- and $j_mh$-edges.
(Notice that it is possible that $i_n=j_m$ and it is also possible that
$i_n=i$ or $j_m=j$ in $H$.)

We also know that, in $\alpha_{\mathrm{RG}}(H;M,C)$, the node $h$ is the
endpoint of a line or on a direction-preserving cycle or there is a
direction-preserving path \mbox{$\pi=\langle h=h_0,h_1,\ldots,h_p=k\rangle$}
from $h$ to $k$ such that $k$ is the endpoint of a line or on a
direction-preserving cycle. Now we consider two cases. In case I, we
suppose that such a $\pi$ does not exist and in case II we suppose
that such a $\pi$ exists.

\emph{Case I.} In case I.1, we suppose that $h$ is the endpoint of a
line and in case I.2 we suppose that $h$ is on a direction-preserving cycle.

\emph{Case I.1.} Suppose that $h$ is the endpoint of an $hl$-line in
$\alpha_{\mathrm{RG}}(H;M,C)$. By Lemma~\ref{lem:22} in $H$ given $M$ and $C,$
there is an $m$-connecting path between $h$ and $l$, on which there is
no arrowhead pointing to $h$ or $l$. One can observe that $h$ is an
ancestor of either (1) a collider node on the path or (2) a node that
is the endpoint of a line on the path. Thus, we have the two following cases:

(1) If $h$ is an ancestor of a collider node $t$, then $h\in\an(C)$
in $H$ since $t\in C\cup\an(C)$. Hence, by Lemma~\ref{lem:2j2}(a),
there is an $m$-connecting path given $M$ and $C$ between $i$ and $j$
in $H$.

(2) If $h$ is an ancestor of a node that is the endpoint of a line on
the path, then by Lemma~\ref{lem:2j2}(c) there is an $m$-connecting
path given $M$ and $C$ between $i$ and $j$ in $H$.

By Lemma~\ref{lem:22}, both cases imply that $i\sim j$ in $\alpha_{\mathrm{RG}}(H;M,C)$ and the $ij$-edge is endpoint-identical to the
$m$-connecting path. Therefore, $\langle i,h,j\rangle$ is not a
ribbon, a contradiction.

\emph{Case I.2.} Suppose that $h$ is on a direction-preserving cycle
in $\alpha_{\mathrm{RG}}(H;M,C)$. By Lemma~\ref{lem:22nn} in $H$ one of the
following holds: (1) $h\in\an(h)$; (2) $h$ or a descendant of $h$ is
the endpoint of a line; (3) $h\in\an(C)$. Cases (2) and (3) lead to
contradiction as explained in case I.1. Therefore, suppose that $h$ is
on a direction-preserving cycle in $H$. This by Lemma~\ref{lem:2j2}(c)
implies that there is an $m$-connecting path given $M$ and $C$ between
$i$ and $j$ in $H$, which implies that $\langle i,h,j\rangle$ is not a
ribbon. This is a contradiction.

\emph{Case II.} By Lemma~\ref{lem:22nn} in $H$ one of the following
holds: (1) $h\in\an(k)$; (2) $h$ or a descendant of $h$ is the
endpoint of a line; (3) $h\in\an(C)$. Cases (2) and (3) lead to
contradiction as explained in case I.1. Hence, it holds that $h\in\an
(k)$ in $H$. This together with the same argument as that of case I
(for $k$ instead of $h$) leads to a contradiction.
\end{pf*}
\begin{pf*}{Proof of Lemma~\ref{lem:22}} ($\Rightarrow$) If an edge
between $i$ and $j$ in $\alpha_{\mathrm{RG}}(H;M,C)$ does
not exist in $H$, then it has been generated by certain intermediate
graphs that have each been generated by adding one edge to the previous
graph by one of
the steps of Table~\ref{tab:21}. We denote these graphs by the sequence
$\langle H=H_0,H_1,\ldots,H_n,\alpha_{\mathrm{RG}}(H;M,C)\rangle$, where $H_n$
is the last step before removing $M$ and $C$.

We prove by reverse induction on $p$ that in all $H_p$, $0\leq p\leq
n$, between $i$ and $j$ there exists a path on which non-collider inner
nodes are in $M$ and collider inner nodes or their descendants are
either in $C$ or the endpoint of a line. For $p=n$, there is obviously
an edge between $i$ and $j$. We show that if there is such a path in
$H_r$ then we can find the same type of path between $i$ and $j$ in $H_{r-1}$.

If all edges along the path exist in $H_{r-1}$, then we should check
that a collider node that is an ancestor of a member of $C$ or an
ancestor of a node that is the endpoint of a line in $H_r$ is an
ancestor of a member of $C$ or an ancestor of a node that is the
endpoint of a line in $H_{r-1}$. If an arrow has been generated along
the direction-preserving path in $H_r$, then it has been generated by
the {\textsf V}s $\langle i',m,j'\rangle$ of the first three steps or the
{\textsf V} $\langle i',s,j'\rangle$ of step 8 of Table~\ref{tab:21}. If it is step
1, then we can replace the $i'j'$-arrow by $\langle i',m,j'\rangle$ to
obtain a direction-preserving path. If it is steps 2 or 3, then node
$j'$ is the endpoint of a line and we are done. If it is step 8 then,
since $s\in C\cup\an(C)$, the inner node of the {\textsf V} is in $\an
(C)$ and we are done.

Thus, suppose that an $i'j'$-edge along the $m$-connecting path is the
edge that has been generated by this step. This has been generated by
one of {\textsf V}s of Table~\ref{tab:21}. Since in all cases the {\textsf V} is
endpoint-identical to the $i'j'$-edge, and since all inner nodes of the
non-collider {\textsf V}s are in $M$ and all inner nodes of the collider
{\textsf V}s are in $C\cup\an(C)$, by placing the {\textsf V} instead of the
$i'j'$-edge on the path, we still get a path whose non-collider inner
nodes are in $M$ and either whose collider inner nodes are in $C\cup
\an(C)$ or whose collider nodes or a descendant of them are the
endpoint of a line, as required.

Therefore, by reverse induction, there exists a path, as described
above, in $H$. However, since $H$ is ribbonless, the path cannot
contain a collider {\textsf V} $\langle i',h,j'\rangle$ such that $h$ or a
descendant of $h$ is the endpoint of a line unless $i'\sim j'$ and the
$i'j'$-edge is endpoint-identical to $\langle i',h,j'\rangle$. In this
case, the $i'j'$-edge can be used instead of $\langle i',h,j'\rangle$
and, by induction, we obtain an $m$-connecting path given $M$ and $C$
between $i$ and $j$.

The fact that in Table~\ref{tab:21} the {\textsf V}s are endpoint-identical to the
generated $i'j'$-edges implies that all discussed paths in each $H_p$
are endpoint-identical.

($\Leftarrow$) Suppose that there is an $m$-connecting path $\pi$ given
$M$ and $C$ between $i$ and $j$ in $H_k$, $0\leq k\leq n-1$. We prove
as long as $r>0$ if there is an $m$-connecting path given $M$ and $C$
between $i$ and $j$ in $H_k$ with $r$
inner nodes then there is an $m$-connecting path given $M$ and $C$
between $i$ and $j$ in $H_{k+1}$ with $r-1$ inner nodes. By induction
we will finally obtain an $m$-connecting path between $i$ and
$j$ without inner nodes, that is, an edge between $j$ and $i$.

Consider an $m$-connecting path given $M$ and $C$ between $i$ and $j$
in $H_k$ with $r>0$
inner nodes. Consider an arbitrary inner node on the path. If this node
is collider, then one of the {\textsf V}s 8, 9, or 10 of Table~\ref{tab:21} is
employed to generate an edge between the endpoints of the {\textsf V} in
$H_{k+1}$. Since the generated edge is endpoint-identical to the ${\textsf
V}$, one can use the generated edge instead of the {\textsf V} to obtain an
$m$-connecting path with $r-1$ inner nodes. If the arbitrary node is
non-collider then one of the other {\textsf V}s of Table~\ref{tab:21} is used.

It is easy to check that the generated edges are endpoint-identical to
the $m$-connecting paths in the final graph. This implies the result.
\end{pf*}
\begin{pf*}{Proof of Proposition~\ref{prop:vn}}
Let $H=(N_1,F_1)\in\mathcal{RG}$. We
generate a directed graph $G=(V,E)$ from $H$ as follows: We leave
arrows that are not on any direction-preserving cycle unchanged. For
direction-preserving cycles, instead of one arbitrary arrow from $j$ to
$i$ on the cycle we place $j\fra\condnc\fla\margn\fra i$, where
$\margn\in M$ and $\condnc\in C$, and leave all other arrows unchanged.
Instead of an arc between $j$ and $i$ we place a {\textsf V}
between $j$ and $i$ with inner source node in $M$. Instead of a
line between $j$ and $i$ we place a {\textsf V} between $j$ and
$i$ with inner collider node in $C$. The graph $G$ is obviously a directed
graph. Furthermore, all newly generated nodes have degree 2 and
the direction of arrows changes on them, hence these cannot be on
any direction-preserving cycle. In addition, if $i$ and $j$ are in
$N_1$ and
$i\sim j$ in $G$ then $i\in\pa(j)$ or $j\in\pa(i)$ in $H$.
Therefore, the existence of a direction-preserving cycle in $G$
implies the existence of the same direction-preserving cycle in $H$.
But by the nature of the construction of $G$ we know that
direction-preserving cycles in $H$ do not make direction-preserving
cycles in $G$, hence $G$ is acyclic.

We should prove that $\alpha_{\mathrm{RG}}(G;M,C)=H$. Let
$\alpha_{\mathrm{RG}}(G;M,C)=(N_2,F_2)$. Obviously $N_1=N_2$. Suppose that
$i\sim j$ ($j\in\pa(i)$, $j\in\spo(i)$, or $j\in\nei(i)$) in
$H$. Therefore, we have the active alternating path or one of the
{\textsf V}s between $i$ and $j$ that by Algorithm
\ref{alg:22} forms exactly the same type of edge in
$\alpha_{\mathrm{RG}}(G;M,C)$.

Conversely, suppose that $i\sim j$
($j\in\pa(i)$, $j\in\spo(i)$, or $j\in\nei(i)$) in
$\alpha_{\mathrm{RG}}(G;M,C)$. By Lemma~\ref{lem:22} we know that there is an
endpoint-identical $m$-connecting path given $M$ and $C$ in $G$.
Consider a shortest endpoint-identical $m$-connecting path $\pi$.
Since in $G$ there is no transition node in $M$, $\pi$ is active
alternating with respect to $M$ and $C\cup\an(C)$.
If $\pi$ has no collider node in $\an(C) \setminus C$, then by the
nature of the construction of $G$ we know that it has two edges
(if both endpoints are children or parents) or three (if it is from $j$
to $i$ on a direction-preserving cycle) and that it has been generated
by an edge (arrow, arc, or line) in $H$.
Suppose, for contradiction, that there is a collider node $i\in\an
(C)\setminus C$ on $\pi$. We have that $i\in N_1$, and by the process
of generating a DAG explained here, the only place that a node in $C$
has been generated is by a line or an arrow on a direction-preserving
cycle in $H$.
Therefore, $i\in\an(k)$ for a node $k$ that is the endpoint of a line
or is on a direction-preserving cycle. Hence, $H$ contains a ribbon, or
the endpoints of the collider {\textsf V} with $i$ as inner node are
adjacent by an endpoint-identical edge. The former contradicts that $H$
is ribbonless, and the latter contradicts that $\pi$ is shortest.
\end{pf*}
\begin{pf*}{Proof of Theorem~\ref{thm:21n}} ($\Rightarrow$) If
there is an $ij$-edge in $\alpha_{\mathrm{RG}}(\alpha_{\mathrm{RG}}(H;M,C);M_1,C_1)$,
then by Lemma~\ref{lem:22}, there is an $m$-connecting path $\pi
=\langle i=i_0,i_1,\ldots,i_{n-1},i_n=j\rangle$ between $i$ and $j$
given $M_1$ and $C_1$ in $\alpha_{\mathrm{RG}}(H;M,C)$ that is
endpoint-identical to the edge.

For the {\textsf V} $\langle i,i_1,i_2\rangle$ on $\pi$, again by Lemma~\ref{lem:22}, given $M$ and $C$ there are $m$-connecting paths $\pi_1$ between $i$ and $i_1$ and $\pi_2$ between $i_1$ and $i_2$ in $H$.
These paths can be considered $m$-connecting given $M\cup M_1$ and
$C\cup C_1$ and are endpoint-identical to the edges. This implies that
if $i_1$ is collider (or non-collider) on $\pi$ then on the
concatenation of $\pi_1$ and $\pi_2$ it remains collider (or
non-collider), or that there is an arrowhead pointing to it (or no
arrowhead pointing to it) from a joint node on $\pi_1$ and $\pi_2$.
We know that if $i_1$ is non-collider then it is in $M_1$ and if it is
collider then it is in $C_1\cup\an(C_1)$ in $\alpha_{\mathrm{RG}}(H;M,C)$. If
$i_1\in\an(C_1)$ in $\alpha_{\mathrm{RG}}(H;M,C)$ then, by Lemma~\ref
{lem:22nn}, one of the following holds in $H$: (1) it is in $\an(C_1)$;
(2) it is in $\an(C)$; (3) it is the endpoint of line or an ancestor of
a node that is the endpoint of a line. Therefore, by Lemma~\ref
{lem:2j2}, there is an $m$-connecting path between $i$ and $i_2$ given
$M\cup M_1$ and $C\cup C_1$ in $H$, which is endpoint-identical to the
{\textsf V}. By induction along $\pi$ there is an $m$-connecting path
between $i$ and $j$ given $M\cup M_1$ and $C\cup C_1$ in $H$, which is
endpoint-identical to the $ij$-edge. Therefore, by Lemma~\ref{lem:22}
there is the same type of $ij$-edge in $\alpha_{\mathrm{RG}}(H;M\cup
M_1,C\cup C_1)$.

($\Leftarrow$) If there is an edge between $i$ and $j$ in $\alpha_{\mathrm{RG}}(H;M\cup M_1,C\cup C_1)$ then, by Lemma~\ref{lem:22}, there is an
$m$-connecting path $\pi$ given $M\cup M_1$ and $C\cup C_1$ in $H$
that is endpoint-identical to the $ij$-edge. All inner nodes of $\pi$
are either in $M\cup C\cup\an(C)$ or $M_1\cup C_1\cup\an(C_1)$.
Therefore, $\pi$ can be partitioned into $m$-connecting subpaths given
$M$ and $C$ and single nodes, where the endpoints of subpaths and
single nodes are in $M_1\cup C_1\cup\an(C_1)$. Therefore, by Lemma~\ref{lem:22}, in $\alpha_{\mathrm{RG}}(H;M,C)$ the endpoints of each of the
discussed subpaths of $\pi$ are connected by an edge that is
endpoint-identical to the subpath.

In addition, for each collider {\textsf V} $\langle l,k,h\rangle$, where
$k\in\an(C_1)$ in $H$ and by Lemma~\ref{lem:22nnn}, one of the
following holds: (1) $k\in\an(C)$ in $H$; (2) $k\in\an(C_1)$ in
$\alpha_{\mathrm{RG}}(H;M,C)$. Case (1) implies that there is an
endpoint-identical $lh$-edge to the {\textsf V} in $\alpha_{\mathrm{RG}}(H;M,C)$,
which can be used instead of $\langle l,k,h\rangle$ to generate an
$m$-connecting path.

Hence in $\alpha_{\mathrm{RG}}(H;M,C),$ there is an $m$-connecting path given
$M_1$ and $C_1$ between $i$ and $j$, which is endpoint-identical to
$\pi$. Therefore, again by Lemma~\ref{lem:22}, there is the same type
of $ij$-edge in $\alpha_{\mathrm{RG}}(\alpha_{\mathrm{RG}}(H;M,C);M_1,C_1)$.
\end{pf*}
\begin{pf*}{Proof of Theorem~\ref{thm:21}} To prove the result, it
is enough to show that, between nodes $i$ and $j$ outside $C\cup
C_1\cup M$, there is an $m$-connecting path given $C\cup C_1$ in $H$ if
and only if there is an $m$-connecting path given $C_1$ in $\alpha_{\mathrm{RG}}(H;M,C)$.

($\Rightarrow$) Suppose that between $i$ and $j$ there is an
$m$-connecting path given $C\cup C_1$ in $H$. This path can be
partitioned into $m$-connecting subpaths given $C$ and single nodes,
where the endpoints of subpaths and single nodes are colliders in
$C_1\cup\an(C_1)$. In addition, for each collider {\textsf V} $\langle
l,k,h\rangle$, where $k\in\an(C_1)$ in $H$ and by Lemma~\ref
{lem:22nnn}, one of the following holds: (1) $k\in\an(C)$ in $H$; (2)~$k\in\an(C_1)$ in $\alpha_{\mathrm{RG}}(H;\varnothing,C)$. Case (1) implies
that there is an endpoint-identical $lh$-edge to the {\textsf V} in $\alpha_{\mathrm{RG}}(H;\varnothing,C)$, which can be used instead of $\langle
l,k,h\rangle$ to generate an $m$-connecting path.

Hence by Lemma~\ref{lem:22}, in $\alpha_{\mathrm{RG}}(H;\varnothing,C)$ there
is an $m$-connecting path given $C_1$ between $i$ and~$j$. The inner
non-collider nodes on this path are either in $M$ or in $N\setminus
(M\cup C\cup C_1)$. On this path there are subpaths with only
non-collider inner nodes in $M$, that is, $m$-connecting subpaths
given $M$ and $\varnothing$. By Theorem~\ref{thm:21n} after
marginalisation over $M$, $\alpha_{\mathrm{RG}}(H;M,C)$ is obtained, in which,
by Lemma~\ref{lem:22}, the endpoints of each of such subpaths are
connected by an edge that is endpoint-identical to the subpath. In
addition, if a collider node is in $\an(C_1)$ in $\alpha_{\mathrm{RG}}(H;\varnothing,C)$, then by Lemma~\ref{lem:22nnn} (since the
conditioning set is empty and case (2) of the lemma does not hold), the
collider node remains in $\an(C_1)$ in $\alpha_{\mathrm{RG}}(H;M,C)$.
Therefore, between $i$ and $j$ there is an $m$-connecting path given
$C_1$ in $\alpha_{\mathrm{RG}}(H;M,C)$.

($\Leftarrow$) Suppose that between $i$ and $j$ there is an
$m$-connecting path $\pi$ given $C_1$ in $\alpha_{\mathrm{RG}}(H;M,C)$. By
Lemma~\ref{lem:22}, for each edge of $\pi$, there is an
endpoint-identical $m$-connecting path given $M$ and $C$ in $H$, which
is obviously an $m$-connecting path given $C$. On the concatenation of
these paths and for the endpoints of the paths we have the two
following cases: (1) When the endpoints are non-collider or there is no
arrowhead at them on the concatenation, they are non-collider on $\pi$
and therefore outside $M\cup C\cup C_1$; (2) When the endpoints are
collider or there is an arrowhead at them, they are collider on $\pi$
and therefore in $C_1\cup\an(C_1)$. Therefore, by Lemma~\ref
{lem:2j2}, there is an $m$-connecting path given $C\cup C_1$ in $H$.
\end{pf*}
\begin{pf*}{Proof of Proposition~\ref{pro:32}} Firstly, the graph
generated by the algorithm has only the three desired types of edges
and no multiple edge of the same type; therefore, it generates a
mixed graph.

Let $H$ be the input summary graph and $H_0=\alpha_{\mathrm{RG}}(H;M,C)$ be the
generated graph after applying step 1 of Algorithm \ref{alg:23}. Here
in part I we prove that there is no arrowhead pointing to lines, and in
part II we prove that there is no direction-preserving cycle in the
generated graph.

\emph{Part I.} If, for contradiction, there is an arrowhead at $i$ (on
an $ik$-edge) and $i$ is the endpoint of an $ij$-line in the generated
graph then we have the following cases: (1) The $ij$-line is an arrow
from $i$ to $j$ in $H_0$; (2) the $ij$-line is an arrow from $j$ to $i$
in $H_0$; (3) the $ij$-line is an arc in $H_0$; (4) the $ik$-edge is an
arrow from $k$ to $i$ in the generated graph and an arc in $H_0$; (5)
the $ki$- and $ij$-edges are the same in $H_0$.

(1) We have that $j\in\an(C)$ in $H$. Hence by Lemma~\ref{lem:22nn}
we have one of the two following cases in $H$: (a) $i$ or a descendant
of $i$ is the endpoint of a line, which, since $H$ is a summary graph,
is a contradiction; (b) $i\in\an(C)$, which by the algorithm implies
that in the generated graph there is no arrowhead at $i$ on the
$ik$-edge, again a contradiction.

(2), (3) We have that $i\in\an(C)$ in $H$, which by the algorithm implies
that in the generated graph there is no arrowhead at $i$ on the
$ik$-edge, a contradiction.

(4), (5) We have that $\langle k,i,j\rangle$ still has an arrowhead
pointing to a line in $H_0$. By Lemma~\ref{lem:22} there is an
arrowhead at $i$ in $H$, hence the line has been generated by the
algorithm. Again by Lemma~\ref{lem:22} we observe that one of the
following holds: (a) $i$ or a descendant of $i$ is the endpoint of a
line, which, since $H$ is a summary graph, is a contradiction; (b) $i\in
\an(C)$, which by the algorithm implies that in the generated graph
there is no arrowhead pointing to $i$ on the $ik$-edge, again a contradiction.

\emph{Part II.} There is also no direction-preserving cycles in the
generated graph: If, for contradiction, there is a direction-preserving
cycle in the generated graph then at least one
arrow, say from $k$ to $l$, should be generated by the algorithm since
there is no direction-preserving cycle in $H$. This arrow can
be generated either by step 1, or by step 2 as an arc replaced by an
arrow. If the $kl$-arrow is generated by step 2 then we have that $k\in
\an(C)$ in $H$. Therefore, there are no arrowheads pointing to $k$ in
the generated graph, which means that $k$ cannot be on a
direction-preserving cycle, a
contradiction.

Therefore, we can assume that the direction-preserving cycle exists in
$H_0$. Since there are no arrowheads pointing to lines in $H$ and $H$
does not contain a direction-preserving cycle, Lemma~\ref{lem:22nn}
implies that a node (and therefore all nodes) of the cycle are in $\an
(C)$. Hence, the arrows turn into lines in the generated graph, a contradiction.
\end{pf*}
\begin{pf*}{Proof of Proposition~\ref{prop:vvn}}
We know that $\mathcal{SG}\subseteq\mathcal{RG}$. Notice that $H$ is
a summary graph. We know that, for the generated directed acyclic graph
$G$, explained in (a), $\alpha_{\mathrm{RG}}(G;M,C)=H$. Suppose, for
contradiction, that step 2 of Algorithm \ref{alg:23} changes the
graph. Thus a node with an arrowhead pointing to is in $\an(C)$, which
implies that it is an ancestor of a node that is the endpoint of a line
or on a direction-preserving cycle in $H$, a contradiction. Therefore,
$\alpha_{\mathrm{SG}}(G;M,C)=H$.
\end{pf*}
\begin{pf*}{Proof of Theorem~\ref{prop:25}}
We show that there is an edge between $i$ and $j$ in $\alpha_{\mathrm{SG}}(H;M\cup M_1,C\cup C_1)$ if and only if there is the same type of edge
in $\alpha_{\mathrm{SG}}(\alpha_{\mathrm{SG}}(H;M,C);M_1,C_1)$. For this purpose for
summary graphs, it is enough to prove that there is an edge between $i$
and $j$ with arrowhead pointing to $j$ in $\alpha_{\mathrm{SG}}(H;M\cup
M_1,C\cup C_1)$ if and only if there is an edge between $i$ and $j$
with arrowhead pointing to $j$
in $\alpha_{\mathrm{SG}}(\alpha_{\mathrm{SG}}(H;M,C);M_1,C_1)$.

($\Rightarrow$) Suppose that in $\alpha_{\mathrm{SG}}(H;M\cup M_1,C\cup C_1)$
there is an $ij$-edge with arrowhead pointing to $j$. We have that
$j\not\in\an(C\cup C_1)$ in $H$ and in $\alpha_{\mathrm{RG}}(H;M\cup
M_1,C\cup C_1)$ there is an $ij$-edge with arrowhead pointing to $j$.
By Lemma~\ref{lem:22}, there is an $m$-connecting path between $i$ and
$j$ given $M\cup M_1$ and $C\cup C_1$ in $H$ with arrowhead pointing to
$j$. By what we showed before in the proof of Theorem~\ref{thm:21n} in
$\alpha_{\mathrm{RG}}(H;M,C)$ there is an $m$-connecting path $\pi$ between
$i$ and $j$ given $M_1$ and $C_1$ with arrowhead pointing to $j$.

Now notice that by step 2 of Algorithm \ref{alg:23} non-collider nodes
remain non-collider. In addition, if a collider {\textsf V} $\langle
h,k,l\rangle$ on $\pi$ turns into non-collider then $k\in\an(C)$
and therefore by step 1 of the algorithm there is an endpoint-identical
$hl$-edge that can be used instead of the {\textsf V} to generate an
$m$-connecting path. Moreover, if the collider node $k$ is in $\an
(C_1)$, and on the direction-preserving path an arrow turns into a line
then $k\in\an(C)$ in $H$ and once again there is an $hl$-edge to be
used instead of the {\textsf V} to establish an $m$-connecting path.
Therefore, there is an $m$-connecting path between $i$ and $j$ given
$M_1$ and $C_1$ in $\alpha_{\mathrm{SG}}(H;M,C)$. We also have that since
$j\not\in\an(C)$ in $H$, there is an arrowhead pointing $j$ on the path.

Now by Lemma~\ref{lem:22} in $\alpha_{\mathrm{RG}}(\alpha_{\mathrm{SG}}(H;M,C);M_1,C_1)$ there is an $ij$-edge with
arrowhead pointing~$j$. In addition, in $\alpha_{\mathrm{SG}}(H;M,C)$, $j\not\in\an(C_1)$: This
is because if, for contradiction, $j\in\an(C_1)$ in $\alpha_{\mathrm{SG}}(H;M,C)$ then, in $\alpha_{\mathrm{RG}}(H;M,C)$, $j\in\an(C\cup C_1)$.
This by Lemma~\ref{lem:22nn} and the fact that $H$ is a summary graph
implies that $j\in\an(C\cup C_1)$ in $H$, a contradiction.

Therefore, since in $\alpha_{\mathrm{SG}}(H;M,C)$, $j\not\in\an(C_1)$, there
is an arrowhead pointing $j$ on the $ij$-edge in $\alpha_{\mathrm{SG}}(\alpha_{\mathrm{SG}}(H;M,C);M_1,C_1)$.

($\Leftarrow$) Suppose that in $\alpha_{\mathrm{SG}}(\alpha_{\mathrm{SG}}(H;M,C);M_1,C_1)$ there is an $ij$-edge with arrowhead pointing to
$j$. This implies that $j\not\in\an(C_1)$ in $\alpha_{\mathrm{SG}}(H;M,C)$.
In $\alpha_{\mathrm{RG}}(\alpha_{\mathrm{SG}}(H;M,C);\break M_1,C_1)$ there is also an
$ij$-edge with arrowhead pointing to $j$. By Lemma~\ref{lem:22}, in
$\alpha_{\mathrm{SG}}(H;M,C)$ there is an $m$-connecting path $\pi$ given
$M_1$ and $C_1$ between $i$ an $j$ with arrowhead pointing to $j$ on
the path. This implies that $j\not\in\an(C)$ in $H$.

By step 2 of the algorithm, collider nodes on $\pi$ in $\alpha_{\mathrm{SG}}(H;M,C)$ are colliders in $\alpha_{\mathrm{RG}}(H;\allowbreak M,C)$. In addition, if a
non-collider {\textsf V} $\langle h,k,l\rangle$ on $\pi$ is collider in
$\alpha_{\mathrm{RG}}(H;M,C)$ then $k\in\an(C)$ in $H$ and therefore by step
1 of the algorithm there is an endpoint-identical $hl$-edge that can be
used instead of the {\textsf V} to generate an $m$-connecting path in
$\alpha_{\mathrm{RG}}(H;M,C)$. Moreover, if the collider node $k$ on $\pi$ is
in $\an(C_1)$, and on the direction-preserving path an arrow is an arc
in $\alpha_{\mathrm{RG}}(H;M,C)$ then the collider node is in $\an(C)$ in $H$
and once again there is an $hl$-edge to be used instead of the {\textsf V}
to establish an $m$-connecting path. Therefore, there is an
$m$-connecting path between $i$ and $j$ given $M_1$ and $C_1$ in
$\alpha_{\mathrm{RG}}(H;M,C)$ with arrowhead pointing to $j$ on the path. By
what we showed before in the proof of Theorem~\ref{thm:21n} in $H$
there is an $m$-connecting path between $i$ and $j$ given $M\cup M_1$
and $C\cup C_1$ with arrowhead pointing to $j$ on the path.

Lemma~\ref{lem:22} implies that in $\alpha_{\mathrm{RG}}(H;M\cup M_1,C\cup
C_1)$ there is an $ij$-edge with arrowhead pointing to $j$. We showed
before that $j\not\in\an(C)$ in $H$. In addition, in $H$, $j\not\in
\an(C_1)$: Suppose, for contradiction, that $j\in\an(C_1)$ in $H$.
This direction-preserving path is also direction-preserving in $\alpha_{\mathrm{RG}}(H;M,C)$ unless possibly a node $t$ on the path is in $M$ or in
$C$. In the former case one can skip $t$ and obtain a
direction-preserving path. In the latter case $j\in\an(C)$ in $H$,
which is not permissible. Therefore, in $\alpha_{\mathrm{RG}}(H;M,C)$, $j\in
\an(C_1)$. Hence, since $j\not\in\an(C)$ in $H$, $j\in\an(C_1)$
in $\alpha_{\mathrm{SG}}(H;M,C)$, a contradiction.

Therefore, $j\not\in\an(C\cup C_1)$ in $H$, and the $ij$-edge has
arrowhead pointing to $j$ in $\alpha_{\mathrm{SG}}(H;M\cup M_1,C\cup C_1)$.
\end{pf*}
\begin{pf*}{Proof of Theorem~\ref{thm:22}}
By what we proved in Theorem~\ref{thm:21} it is enough to
show between $i$ and $j$ there is an $m$-connecting path given $C_1$ in
$\alpha_{\mathrm{RG}}(H;M,C)$ if and only if there is an $m$-connecting path
given $C_1$
in $\alpha_{\mathrm{SG}}(H;M,C)$.

($\Rightarrow$) Consider an $m$-connecting path given $C_1$ in $\alpha_{\mathrm{RG}}(H;M,C)$ between $i$ and $j$. There obviously exists an
$m$-connecting path $\pi$ given $C_1$ in $\alpha_{\mathrm{RG}}(H;M,C)$.
Now by step 2 of the algorithm non-collider nodes remain non-collider.
In addition, if a collider {\textsf V} $\langle h,k,l\rangle$ on $\pi$
turns into non-collider then $k\in\an(C)$ and therefore by step 1 of
the algorithm there is an endpoint-identical $hl$-edge generated that
can be used instead of the {\textsf V} to generate an $m$-connecting path.
Moreover, if the collider node $k$ is in $\an(C_1)$, and on the
direction-preserving path an arrow turns into a line then $k\in\an
(C)$ in $H$ and once again there is an $hl$-edge to be used instead of
the {\textsf V} to establish an $m$-connecting path. Therefore, there is an
$m$-connecting path between $i$ and $j$ given $C_1$ in $\alpha_{\mathrm{SG}}(H;M,C)$.

($\Leftarrow$) Consider an $m$-connecting path $\pi$ given $C_1$ in
$\alpha_{\mathrm{SG}}(H;M,C)$ between $i$ and $j$. By step 2 of the algorithm,
collider nodes on $\pi$ in $\alpha_{\mathrm{SG}}(H;M,C)$ are colliders in
$\alpha_{\mathrm{RG}}(H;M,C)$. In addition, if a non-collider {\textsf V} $\langle
h,k,l\rangle$ on $\pi$ is collider in $\alpha_{\mathrm{RG}}(H;M,C)$ then
$k\in\an(C)$ in $H$ and therefore by step 1 of the algorithm there is
an endpoint-identical $hl$-edge that can be used instead of the {\textsf V}
to generate an $m$-connecting path in $\alpha_{\mathrm{RG}}(H;M,C)$. Moreover,
if the collider node $k$ on $\pi$ is in $\an(C_1)$, and on the
direction-preserving path an arrow is an arc in $\alpha_{\mathrm{RG}}(H;M,C)$
then $k\in\an(C)$ in $H$ and once again there is an $hl$-edge to be
used instead of the {\textsf V} to establish an $m$-connecting path.
Therefore, there is an $m$-connecting path between $i$ and $j$ given
$C_1$ in $\alpha_{\mathrm{RG}}(H;M,C)$.
\end{pf*}
\begin{pf*}{Proof of Proposition~\ref{prop:26}}
To prove that the graph generated by Algorithm \ref{alg:25} is an AG,
first notice that graphs generated by step 1 of Algorithm \ref{alg:25}
are summary graphs. Therefore, since steps 2 and 3 do not generate any
lines, it is enough to prove that steps 2 and 3 of the algorithm remove
all subgraphs where there is an arc with one endpoint that is an
ancestor of the other endpoint in the generated summary graph, and that
these do not generate any direction-preserving cycles.

Step 3 of the algorithm removes all such subgraphs. Step 2 does not
generate any direction-preserving cycles by adding an arrow to the
graph: Consider the first iteration of the algorithm, where, for
contradiction, a direction-preserving cycle is generated. If it is
generated by generating an arrow from $i$ to $j$ then we know that
there is $i\fra k\arc j$, where $k\in\an(j)$. Denote the
direction-preserving path from $k$ to $j$ by $\pi_1$ and the
direction-preserving path from $j$ to $i$ which, together with the
generated $ij$-arrow, establishes a direction-preserving cycle by $\pi_2$. It is seen that in the previous iteration of the algorithm
$\langle\pi_2,k,\pi_1\rangle$ is a direction-preserving cycle, a
contradiction.

Step 3 does not generate any direction-preserving cycles by replacing
an arc by an arrow: Consider the first iteration of the algorithm,
where, for contradiction, a direction-preserving cycle is generated. If
it is generated by replacing an $ij$-arc by an arrow from $i$ to $j$,
then we know there is a direction-preserving path $\pi_1$ from $i$ to
$j$. Denote the direction-preserving path from $j$ to $i$ which,
together with the generated $ij$-arrow, establishes a
direction-preserving cycle by $\pi_2$. It is seen that in the
previous iteration of the algorithm $\langle\pi_1,\pi_2\rangle$ is
a direction-preserving cycle, a contradiction.
\end{pf*}
We use the following lemma to prove Theorem~\ref{thm:25}. For more
descriptive proofs for the following results, see \citep{sad12}.
%
\begin{lemma}\label{lem:2nn}
Let $H$ be a summary graph and $M$ and $C$ be two subsets of its node
set. It holds that $\alpha_{\mathrm{AG}}(\alpha_{\mathrm{SG}.\mathrm{AG}}(H);M,C)=\alpha_{\mathrm{AG}}(H;M,C)$.
\end{lemma}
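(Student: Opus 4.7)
I will use Proposition \ref{prop:35} to reduce the problem to a statement about SG-generation, and then verify the reduced statement by a careful edge-wise comparison. Since $\alpha_{AG}=\alpha_{SG.AG}\circ\alpha_{SG}$, and since $\alpha_{SG.AG}(H)$ is an AG (hence in particular an SG), both sides of the claimed equality may be rewritten as $\alpha_{AG}(\alpha_{SG.AG}(H);M,C)=\alpha_{SG.AG}(\alpha_{SG}(\alpha_{SG.AG}(H);M,C))$ and $\alpha_{AG}(H;M,C)=\alpha_{SG.AG}(\alpha_{SG}(H;M,C))$. Setting $H':=\alpha_{SG.AG}(H)$, $K:=\alpha_{SG}(H;M,C)$, and $K':=\alpha_{SG}(H';M,C)$, it is enough to prove $\alpha_{SG.AG}(K)=\alpha_{SG.AG}(K')$.

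\textbf{Preliminary step.} I would first establish that $\an(\cdot)$ is the same in $H$ and $H'$, and consequently that it is the same in $K$ and $K'$. This is immediate for the passage $H\mapsto H'$: step 2 of Algorithm \ref{alg:25} introduces an arrow from $j$ to $i$ only when $j$ was already in $\an_H(i)$ (via the parent $k$ of $j$ with $k\in\an_H(i)$), and step 3 replaces a $ji$-arc by $j\fra i$ only when $j\in\an_H(i)$. Neither operation creates nor destroys ancestor relations. The passage from $H,H'$ to $K,K'$ via $\alpha_{SG}$ preserves this coincidence, because step 1 of Algorithm \ref{alg:23} acts uniformly on the ancestral structure (generating arrows across $M$-nodes) and step 2 depends only on $\an(C)$, which is unchanged between $H$ and $H'$.

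\textbf{Edge-by-edge comparison.} I would then compare $K$ and $K'$ and argue that their images under $\alpha_{SG.AG}$ coincide. On one hand, every edge of $K$ is also present in $K'$: each edge of $H$ is present in $H'$ (possibly modified from arc to arrow by step 3 of $\alpha_{SG.AG}$, but then endpoint-identical), and the $\alpha_{SG}$-triggers (V's with inner node in $M\cup\an(C)$) are unchanged by the preliminary step. On the other hand, every edge of $K'\setminus K$ must come either from an edge in $H'\setminus H$ that survives $\alpha_{SG}$, or from a V in $H'$ whose non-collider (respectively, collider) inner node lies in $M$ (respectively, in $\an(C)$) and which does not already exist in $H$. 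In each of these cases I would use Lemma \ref{lem:22}, applied to the underlying V or direction-preserving path in $H$ that justifies the corresponding edge in $H'$, to exhibit the triggering configuration inside $K$ itself, and then observe that step 2 or step 3 of Algorithm \ref{alg:25} applied to $K$ produces exactly the same extra edge. Since $\alpha_{SG.AG}$ adds and modifies edges in the same places on both SGs, the resulting AGs are identical.

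\textbf{Main obstacle.} The crux is the second half of the comparison. One needs to handle separately the three sources of ``extra information'' injected by the preliminary $\alpha_{SG.AG}$: shortcut arrows from V's of the form $j\fra k\arc i$ with $k\in\an(i)$; shortcut arcs from V's of the form $j\arc k\arc i$ with $k\in\an(i)$; and arc-to-arrow conversions in step 3. For each case the analysis splits further according to whether the critical inner nodes lie in $M$, in $\an(C)$, or outside $M\cup C\cup\an(C)$, and one must verify that each configuration producing an edge in $K'$ also leads to the same edge under $\alpha_{SG.AG}$ applied to $K$. This bookkeeping, while routine in spirit, is the real content of the proof.
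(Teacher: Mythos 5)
Your overall strategy coincides with the paper's: rewrite both sides via $\alpha_{AG}=\alpha_{SG.AG}\circ\alpha_{SG}$, set $H'=\alpha_{SG.AG}(H)$, and reduce the lemma to showing that $K=\alpha_{SG}(H;M,C)$ and $K'=\alpha_{SG}(H';M,C)$ have the same image under $\alpha_{SG.AG}$, by tracking how the differences between $H$ and $H'$ propagate through the generating algorithm. The difficulty is that you stop exactly where the proof begins: you concede that the case analysis ``is the real content of the proof,'' and the little of it that you do assert is incorrect as stated. The claim that every edge of $K$ is present in $K'$ because an arc replaced by an arrow under step 3 of Algorithm \ref{alg:25} is ``endpoint-identical'' is false: replacing $j\arc i$ by $j\fra i$ removes the arrowhead at $j$, so the two edges are not endpoint-identical, and the replacement genuinely changes the set of triggering {\sf V}s. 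For instance, the collider {\sf V} $k\fra j\arc i$ with $j\in C\cup\an(C)$ generates a $kj$-to-$i$ arrow in $H$ (entry 8 of Table 1), whereas after the replacement $k\fra j\fra i$ is a non-collider {\sf V} and triggers nothing unless $j\in M$. For the same reason your preliminary assertion that ``the $\alpha_{SG}$-triggers are unchanged'' between $H$ and $H'$ does not hold. The lemma survives, but the rescue is not endpoint-identity: it is the observation that whenever the arc-to-arrow conversion (or the insertion of a shortcut edge) destroys a trigger, the edge that trigger would have produced has already been inserted into $H'$ by step 2 of Algorithm \ref{alg:25} applied to the corresponding {\sf V} in $H$ --- in the example above, $k\fra j\arc i$ with $j\in\an(i)$ already forces the arrow from $k$ to $i$ into $H'$. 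This compensation argument, carried out for each of the two kinds of difference between $H$ and $H'$ and for each stage of Algorithm \ref{alg:23}, is precisely the paper's proof; in your write-up it is replaced by the statement that the bookkeeping is routine.

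The converse inclusion is also only a plan, and it hides a second gap. What must be shown is that the residual differences between $K$ and $K'$ are again exactly of the two types that $\alpha_{SG.AG}$ eliminates: an arc in $K$ whose one endpoint is an ancestor of the other where $K'$ carries the corresponding arrow, and a missing edge of $K$ for which $K$ contains a primitive inducing path, so that iterated application of step 2 of Algorithm \ref{alg:25} regenerates the edge. Your plan speaks of using Lemma \ref{lem:22} to ``exhibit the triggering configuration inside $K$,'' but a single {\sf V} of the form $j\arc k\arc i$ with $k\in\an(i)$ in $H$ need not survive marginalisation and conditioning as a {\sf V} of that form in $K$; it survives only as a (possibly longer or shorter) primitive inducing path, and one must check that the ancestor relations and collider statuses along that path persist. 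This is why the paper phrases the differences in terms of primitive inducing paths rather than single {\sf V}s, and it is the part of the argument your proposal does not supply.
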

\begin{pf}
There are two differences between $H$ and $\alpha_{\mathrm{SG}.\mathrm{AG}}(H)$: (1) For
an $ij$-arc such that $j\in\an(i)$ in $H$, there is an arrow from $j$
to $i$ replaced in $\alpha_{\mathrm{SG}.\mathrm{AG}}(H)$; (2) for a primitive inducing
path $\pi$ between $i$ and $j$ in $H$, there is an endpoint-identical
$ij$-edge in $\alpha_{\mathrm{SG}.\mathrm{AG}}(H)$.

Notice that $\operatorname{an}(C)$ is the same in both graphs. After applying step 1
of Algorithm \ref{alg:23} (a part of step 1 of Algorithm \ref
{alg:25}), for each difference, the following occurs:

(1) This step of the algorithm may generate further differences for (1)
if there is a $kj$-edge with an arrowhead pointing to $j$ and $j\in
C\cup\an(C)$. In this case, there is a $ki$-edge in $H$. However,
such an edge already exists in $\alpha_{\mathrm{SG}.\mathrm{AG}}(H)$ since $\langle
k,j,i\rangle$ in $H$ generates an edge by $\alpha_{\mathrm{SG}.\mathrm{AG}}$;

(2) This step of the algorithm may generate further differences for (2)
if $\pi$ has more than three nodes and one of the two following cases
occurs: (a) there is an arrowhead pointing to $j$ on $\pi$, there is a
$kj$-edge with an arrowhead pointing to $j$, and $j\in C\cup\an(C)$;
(b) there is a $kj$-edge with no arrowhead pointing to $j$, and $j\in
M$. In both cases, by using the $\langle i,j,k\rangle$-{\textsf V} a
$ki$-edge is generated in $\alpha_{\mathrm{SG}.\mathrm{AG}}(H)$. In $H$, by using the
$\langle h,j,k\rangle$-{\textsf V}, where $h$ is the node adjacent to $j$
on $\pi$, a $kh$-edge is generated, which establishes an
endpoint-identical primitive inducing path between $i$ and $k$ in
$\alpha_{\mathrm{SG}.\mathrm{AG}}(H)$;
%

After applying step 2 of Algorithm \ref{alg:23}, the following occurs:
(1) This step of the algorithm may generate further differences for (1)
if $j\in C\cup\an(C)$. In this case, $ij$-arc in $H$ turns into an
arrow from $j$ to $i$, which, however, already exists in $\alpha_{\mathrm{SG}.\mathrm{AG}}(H)$;
(2) This step of the algorithm may generate further differences for (2)
if any of the nodes on $\pi$, say $l$, is in $C\cup\an(C)$. However,
in $H$, by the previous step of the algorithm, the nodes adjacent to
$l$ on $\pi$ have become adjacent, and established a shorter primitive
inducing path between $i$ and $k$ (or $j$).

Hence, thus far, the differences between the two generated graphs are
the same as the differences between $H$ and $\alpha_{\mathrm{SG}.\mathrm{AG}}(H)$.
Therefore, by applying steps 2 and 3 of Algorithm \ref{alg:25}, the
same graphs will be generated.
\end{pf}
\begin{pf*}{Proof of Theorem~\ref{thm:25}}
Using Theorem~\ref{prop:25}, Proposition~\ref{prop:35}, and Lemma~\ref{lem:2nn}, we have the following:
\begin{eqnarray*}
&&\alpha_{\mathrm{AG}} \bigl(\alpha_{\mathrm{AG}}(H;M,C);M_1,C_1
\bigr)\\
&& \quad =\alpha_{\mathrm{AG}} \bigl(\alpha_{\mathrm{SG}.\mathrm{AG}} \bigl(
\alpha_{\mathrm{SG}}(H;M,C) \bigr);M_1,C_1 \bigr)
 =\alpha_{\mathrm{AG}} \bigl(\alpha_{\mathrm{SG}}(H;M,C);M_1,C_1
\bigr)\\
&& \quad = \alpha_{\mathrm{SG}.\mathrm{AG}} \bigl(\alpha_{\mathrm{SG}} \bigl(
\alpha_{\mathrm{SG}}(H;M,C);M_1,C_1 \bigr) \bigr)
  =\alpha_{\mathrm{SG}.\mathrm{AG}} \bigl(\alpha_{\mathrm{SG}}(H;M\cup M_1,C\cup
C_1) \bigr)\\
&& \quad =\alpha_{\mathrm{AG}}(H;M\cup M_1,C\cup
C_1).
\end{eqnarray*}
\upqed
\end{pf*}
\begin{pf*}{Proof of Theorem~\ref{thm:23}}
By what we proved in Theorem~\ref{thm:22}, it is enough to
show between $A$ and $B$ there is an $m$-connecting path given $C_1$ in
$\alpha_{\mathrm{SG}}(H;M,C)$ if and only if there is an $m$-connecting path
given $C_1$
in $\alpha_{\mathrm{AG}}(H;M,C)$.

Let $\langle\alpha_{\mathrm{SG}}(H;M,C)=H_0,H_1,\ldots,H_m\rangle$ be
intermediate graphs that have each been generated by adding one edge to the
previous graph by step 2 of Algorithm \ref{alg:25}. In addition, let
$\langle H_m=H'_0,H'_1,\ldots,H'_n=\alpha_{\mathrm{AG}}(H;M,C)\rangle$ be
intermediate graphs that have each been generated by replacing one edge
in the previous graph by step 3 of Algorithm \ref{alg:25}.

Suppose that in the step between $H_p$ and $H_{p+1}$ an arrow from $j$
to $i$ or an arc between $i$ and $j$ for {\textsf V} $j\fra k\arc i$ or
{\textsf V} $j\arc k\arc i$, when $k\in\an(i)$, is generated. It holds
that there is an $m$-connecting path between $A$ and $B$ given
$C_1$ in $H_p$ if and only if there is an $m$-connecting path between
$A$ and $B$ given $C_1$ in $H_{p+1}$. This is because if $k\in
C_1$ or one of the descendants of $k$ on the direction-preserving path
from $k$ to $i$ is in $C_1$ then the $ij$-edge and the {\textsf V} $j\arc
k\arc i$ can be interchanged on the $m$-connecting path. If these nodes
are not in $C_1$ then the $ij$-edge and the path made up by the
$jk$-edge and the direction-preserving path from $k$ to $i$ can be
interchanged. By induction in one direction and reverse induction in
the other direction, we conclude that there is an $m$-connecting path
given $C_1$ in $\alpha_{\mathrm{SG}}(H;M,C)$ if and only if there is an
$m$-connecting path given $C_1$ in~$H_m$.

Now suppose that in the step between $H'_{p'}$ and $H'_{p'+1}$ an arrow
from $j$ to $i$ has\vspace*{1pt} been replaced by an arc between $j$ and $i$, where
$j\in\an(i)$. The only interesting case here is when there is an edge
between $j$ and another node $l$ with arrowhead pointing to $j$. In
this case, an edge has been already generated between $l$ and $i$ by
step 2 of the algorithm. Therefore, again by induction in one direction
and reverse induction in the other direction the result follows.
\end{pf*}
\end{appendix}

\section*{Acknowledgements}
The author is grateful to Steffen Lauritzen, Nanny Wermuth, and the
referees for helpful comments.


%

\printhistory

\end{document}